\newif\ifarxiv
\theoremstyle{definition}
\newtheorem{theorem}{Theorem}[section]
\newtheorem{lemma}[theorem]{Lemma}
\newcommand{\ackname}{Acknowledgements}
  \providecommand\BibTeX{{%
    \normalfont B\kern-0.5em{\scshape i\kern-0.25em b}\kern-0.8em\TeX}}}
\begin{document}
\pagestyle{plain}

\title{Blockchain is Watching You: Profiling and Deanonymizing Ethereum Users\thanks{Support from the project 2018-1.2.1-NKP-00008:  Exploring the Mathematical Foundations of Artificial Intelligence of the Hungarian Government.}}
\ifarxiv

\author{Ferenc B\'eres$^{1,2}$, István A. Seres$^2$, Andr\'as A. Bencz\'ur$^{1,3}$, Mikerah Quintyne-Collins$^4$\\
 $^1$Institute for Computer Science and Control (SZTAKI), Budapest, Hungary\\
$^2$E\"otv\"os Loránd University, Budapest, Hungary\\
$^3$Széchenyi University, Gy\H{o}r, Hungary\\
$^4$HashCloack Inc., Toronto, Canada
}
\maketitle
\else
\author{}
\fi

\begin{abstract}
Ethereum is the largest public blockchain by usage. It applies an account-based model, which is inferior to Bitcoin's unspent transaction output model from a privacy perspective. Due to its privacy shortcomings, recently several privacy-enhancing overlays have been deployed on Ethereum, such as non-custodial, trustless coin mixers and confidential transactions.\\In our privacy analysis of Ethereum's account-based model, we describe several patterns that characterize only a limited set of users and successfully apply these ``quasi-identifiers'' in address deanonymization tasks. Using Ethereum Name Service identifiers as ground truth information, we quantitatively compare algorithms in recent branch of machine learning, the so-called graph representation learning, as well as time-of-day activity and transaction fee based user profiling techniques. As an application, we rigorously assess the privacy guarantees of the Tornado Cash coin mixer by discovering strong heuristics to link the mixing parties. To the best of our knowledge, we are the first to propose and implement Ethereum user profiling techniques based on quasi-identifiers.\\Finally, we describe a malicious value-fingerprinting attack, a variant of the Danaan-gift attack, applicable for the confidential transaction overlays on Ethereum. By incorporating user activity statistics from our data set, we estimate the success probability of such an attack.
\end{abstract}{}

\ifarxiv
\else
\maketitle
\fi

\section{Introduction}
The narrative around cryptocurrency privacy provisions has dramatically changed since the inception of Bitcoin~\cite{nakamoto2019bitcoin}. Initially many, especially criminals, thought Bitcoin and other cryptocurrencies provide privacy to hide their illicit business activities~\cite{christin2013traveling}. The first extensive study about Bitcoin's privacy provisions was done by Meiklejohn et al~\cite{meiklejohn2013fistful}, in which they provide several powerful heuristics allowing one to cluster Bitcoin addresses. The revelation of Bitcoin's privacy shortcomings spurred the creation and implementation of many privacy-enhancing overlays for Bitcoin~\cite{valenta2015blindcoin,bonneau2014mixcoin,ruffing2014coinshuffle,ziegeldorf2015coinparty}. As of today, several Bitcoin wallets, e.g. Wasabi and Samourai wallets, provide privacy-enhancing solutions to their users.

Previous work has focused on assessing the privacy guarantees provided by several UTXO-based (unspent transaction output) cryptocurrencies, such as Bitcoin~\cite{androulaki2013evaluating,meiklejohn2013fistful}, Monero~\cite{chervinski2019floodxmr,moser2018empirical,biryukov2019security} or Zcash~\cite{biryukov2018deanonymization,biryukov2019privacy,biryukov2019security,kappos2018empirical,tramer2019ping}.

However, perhaps surprisingly, until today there were no similar empirical studies on account-based cryptocurrency privacy provisions. Therefore in this work, we put forth the problem of studying the privacy guarantees of Ethereum's account-based model. Assessing and understanding the privacy guarantees of cryptocurrencies is essential as the lack of financial privacy is detrimental to most cryptocurrency use cases. Furthermore, there are state-sponsored companies and other entities, e.g. Chainalysis~\cite{nelson2020inside}, performing large-scale deanonymization tasks on cryptocurrency users. 

In contrast to the UTXO-model, many cryptocurrencies that provide smart contract functionalities operate with accounts. In an account-based cryptocurrency, users store their assets in accounts rather than in UTXOs. Already in the Bitcoin white paper, Nakamoto suggested that ``a new key pair should be used for each transaction to keep them from being linked to a common owner''~\cite{nakamoto2019bitcoin}. Despite this suggestion, account-based cryptocurrency users tend to use only a handful of addresses for their activities. In an account-based cryptocurrency, native transactions can only move funds between a single sender and a single receiver, hence in a payment transaction, the change remains at the sender account. Thus, a subsequent transaction necessarily uses the same address again to spend the remaining change amount. Therefore, the account-based model essentially relies on address-reuse on the protocol level. This behavior practically renders the account-based cryptocurrencies inferior to UTXO-based currencies from a privacy perspective.

Previously, several works had identified the privacy shortcomings of the account-based model, specifically in Ethereum. Those works had proposed trustless coin mixers~\cite{meiklejohn2018mobius,seres2019mixeth,shlomovits2019sharelock} and confidential transactions~\cite{williamson2018aztec,bunz2019zether,cryptoeprint:2019:319}. However, until recently, none of these schemes has been deployed on Ethereum. Even today, Ethereum's privacy-enhancing overlays are still in a nascent, immature phase especially in comparison with Bitcoin's well-established coin mixer scene
~\cite{bonneau2014mixcoin,valenta2015blindcoin,ziegeldorf2015coinparty,heilman2017tumblebit,ruffing2014coinshuffle}.

\textbf{Our contributions:}
\begin{itemize}
    \item We identify and apply several quasi-identifiers stemming from address reuse (time-of-day activity, transaction fee, transaction graph), which allow us to profile and deanonymize Ethereum users.
    \item In the cryptocurrency domain, we are the first to quantitatively assess the performance of a recent area of machine learning in graphs, the so-called node embedding algorithms.
    \item We establish several heuristics to decrease the privacy guarantees of non-custodial mixers on Ethereum.
    \item We describe a version of the malicious value fingerprinting attack, also known as Danaan-gift attack~\cite{biryukov2019privacy}, applicable in Ethereum.
    \item We collect and analyze a wide source of Etherum related data, including Ethereum name service (ENS), Etherscan blockchain explorer, Tornado Cash mixer contracts, and Twitter. We release the collected data as well as our source code for further research\ifarxiv\footnote{\url{https://github.com/ferencberes/ethereum-privacy}}\fi.
\end{itemize}

The rest of the paper is organized as follows. In Section~\ref{sec:relatedwork}, we review related work. In Section~\ref{sec:background}, a brief background is given on the inner workings of Ethereum along with the general idea behind node embedding. In Section~\ref{sec:data}, we describe our collected data.
In Section~\ref{sec:evaluationmeasures}, we overview the literature on evaluating deanonymization methods and propose our metrics.
Our main methods to pair Ethereum addresses that belong to the same user and link Tornado deposits and withdrawals are detailed in Section~\ref{sec:pairing} and~\ref{sec:deanonymising}. A variant of the Danaan-gift attack is described in Section~\ref{sec:danaangift}. Finally, we conclude our paper in Section~\ref{sec:conclusion}. 
    
\section{Related Work} \label{sec:relatedwork}
First results on  Ethereum deanonymization~\cite{Klusman2018DeanonymisationIE} attempted to directly apply both on-chain and peer-to-peer (P2P) Bitcoin deanonymization techniques. The starting point of our work is the recognition that common deanonymization methods for Bitcoin \emph{are not} applicable to Ethereum due to differences in Ethereum's P2P stack and account-based model.

The relevant body of more recent literature takes two different approaches. The first analyzes Ethereum smart contracts with unsupervised clustering techniques~\cite{norvill2017automated}. Kiffer et al.~\cite{kiffer2018analyzing} assert a large degree of code reuse which might be problematic in case of vulnerable and buggy contracts.

The second branch of literature assesses  Ethereum addresses. A crude and initial analysis had been made by Payette et al., who clusters the Ethereum address space into only four different groups~\cite{payette2017characterizing}. More interestingly Friedhelm Victor proposes address clustering techniques based on participation in certain airdrops and ICOs~\cite{victoraddress}. These techniques are indeed powerful, however, they do not generalize well as it assumes participation in certain on-chain events. Our techniques are more general and are applicable to all Ethereum addresses. Victor et al. gave a comprehensive measurement study of Ethereum's ERC-20 token networks, which further facilitates the deanonymization of ERC-20 token holders~\cite{victor2019measuring}.

A completely different and unique approach is taken by~\cite{Shlomi2019}, which uses stylometry to deanonymize smart contract authors and their respective accounts. The work had been used to identify scams on Ethereum.

\section{Background} \label{sec:background}
In this section we provide some background on cryptocurrency privacy-enhancing technologies as well as node embedding algorithms. Elementary preliminaries on Ethereum and its applied gas mechanism are included in Appendix~\ref{sec:ethbasics}.

\subsection{Non-custodial mixers}
Coin mixing is a prevalent technique to enhance transaction privacy of cryptocurrency users. Coin mixers may be custodial or non-custodial. In case of custodial mixing, users send their ``tainted'' coins to a trusted party, who in return sends back ``clean'' coins after some timeout. This solution is not satisfactory as the user does not retain ownership of her coins during the course of mixing. Hence, the trusted mixing party might just steal funds, as it already happened with custodial mixers~\cite{meiklejohn2013fistful}. 
\begin{figure}
    \centering
    \includegraphics[width=0.5\textwidth,trim={4cm 9.5cm 2cm 10cm},clip]{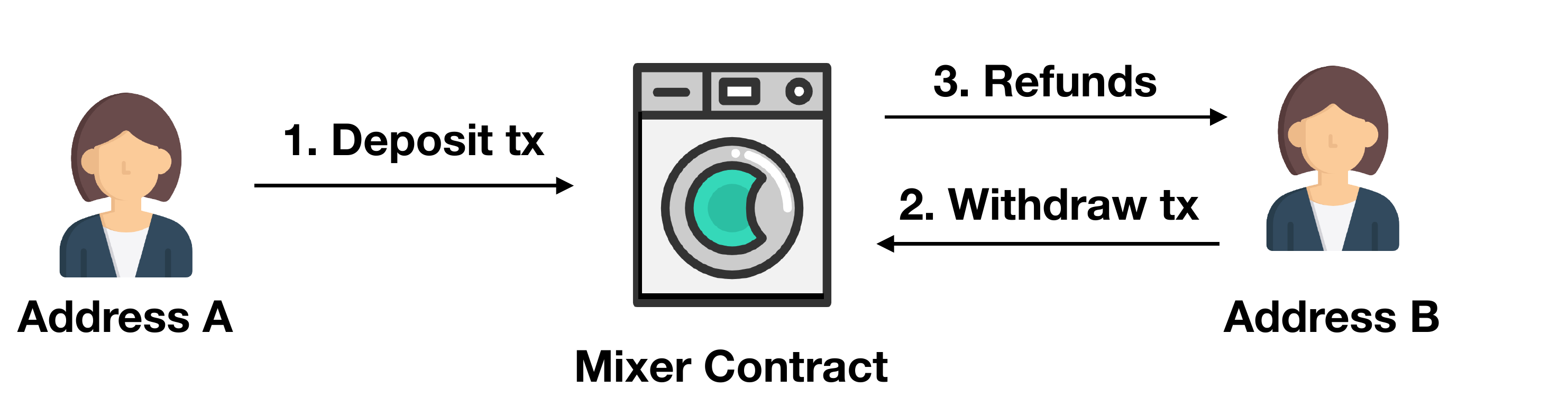}
    \caption{Schematic depiction of non-custodial mixers on Ethereum}
    \label{fig:mixerexplanation}
\end{figure}

Motivated by these drawbacks, recently several non-custodial mixers have been  proposed in the literature~\cite{meiklejohn2018mobius,whitehat2018miximus,seres2019mixeth,shlomovits2019sharelock}. The recurring theme of non-custodial mixers is to replace the trusted mixing party with a publicly verifiable transparent smart contract or with secure multi-party computation (MPC). Non-custodial mixing is a two-step procedure. First, users deposit equal amounts of ether or other tokens into a mixer contract from an address $\mathcal{A}$, see Figure~\ref{fig:mixerexplanation}. After some user-defined time interval, they can withdraw their deposited coins with a withdraw transaction to a fresh address $\mathcal{B}$. In the withdraw transaction, users can prove to the mixer contract that they deposited without revealing which deposit transaction was issued by them by using one of several available cryptographic techniques, including ring signatures~\cite{meiklejohn2018mobius}, verifiable shuffles~\cite{seres2019mixeth}, threshold signatures~\cite{shlomovits2019sharelock}, and zkSNARKs~\cite{whitehat2018miximus}.

\subsection{Ethereum Name Service}
Ethereum Name Service (ENS) is a distributed, open, and extensible naming system based on the Ethereum blockchain. In spirit it is similar to the well-known Domain Name Service (DNS). However, in ENS the registry is implemented in Ethereum smart contracts\footnote{See: \url{https://docs.ens.domains}}, hence it is resistant to DoS attacks and data tampering. Like DNS, ENS operates on a system of dot-separated hierarchical names called domains, with the owner of a domain having full control over subdomains. ENS maps human-readable names like \texttt{alice.eth} to machine-readable identifiers such as Ethereum addresses. Therefore, ENS provides a more user-friendly way of transferring assets on Ethereum, where users can use ENS names (\texttt{alice.eth}) as recipient addresses instead of the error-prone hexadecimal Ethereum addresses.  

\subsection{Node embeddings}\label{sec:background_embeddings}

Node embedding methods form a class of network representation learning methods that map graph nodes to vectors in a low-dimensional vector space. They are designed to represent vertices with similar neighborhood structure by vectors that are close in the vector space. Intuitively, addresses that interact with the same set of addresses in the Ethereum transaction graph should be close in the embedded space. Perhaps the best methods are Laplacian eigenmaps~\cite{laplacian} and graph factorization~\cite{graphfactorization}. Research in node embedding has  recently been catalyzed by Word2Vec~\cite{word2vec}, an embedding method for natural language processing. Several node embedding methods have been proposed recently \cite{deepwalk,line,node2vec,netmf} and applied successfully for multi-label classification and link prediction in a variety of real-world networks from diverse domains. In this work, we use these techniques on the Ethereum transaction graph to link addresses owned by the same user. To the best of our knowledge, we are the first to apply node embedding for Ethereum user profiling.

\begin{figure}[t]
    \includegraphics[width=0.5\textwidth]{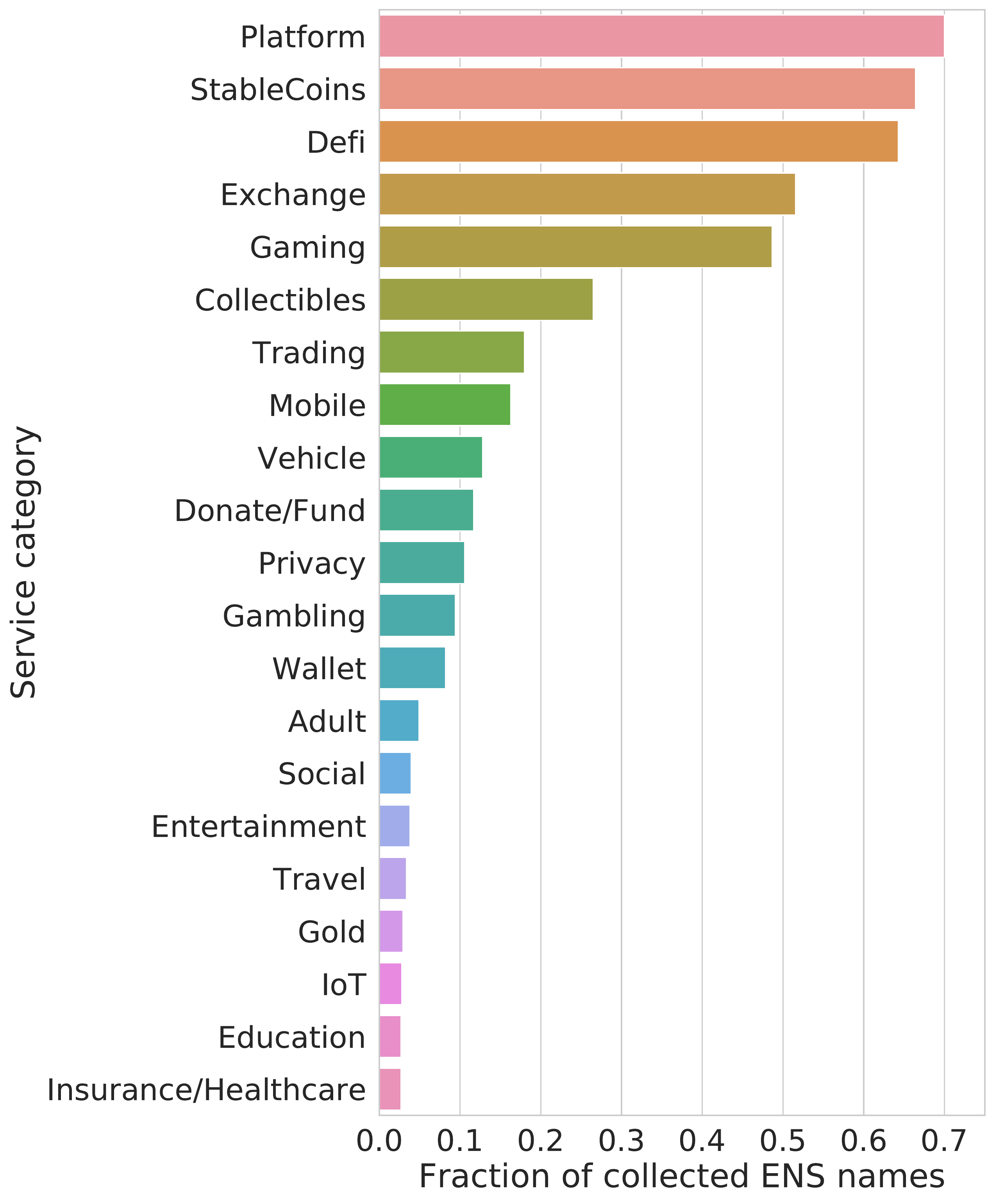}
    \caption{Fraction of ENS names (collected from Twitter) that interacted with the given service topics. Popular services within the categories are shown in Figure~\ref{fig:service_names}.}
    \label{fig:qualitativeaddressanalysis}
\end{figure}

\begin{figure}[t]
\centering
    \includegraphics[width=0.3\textwidth]{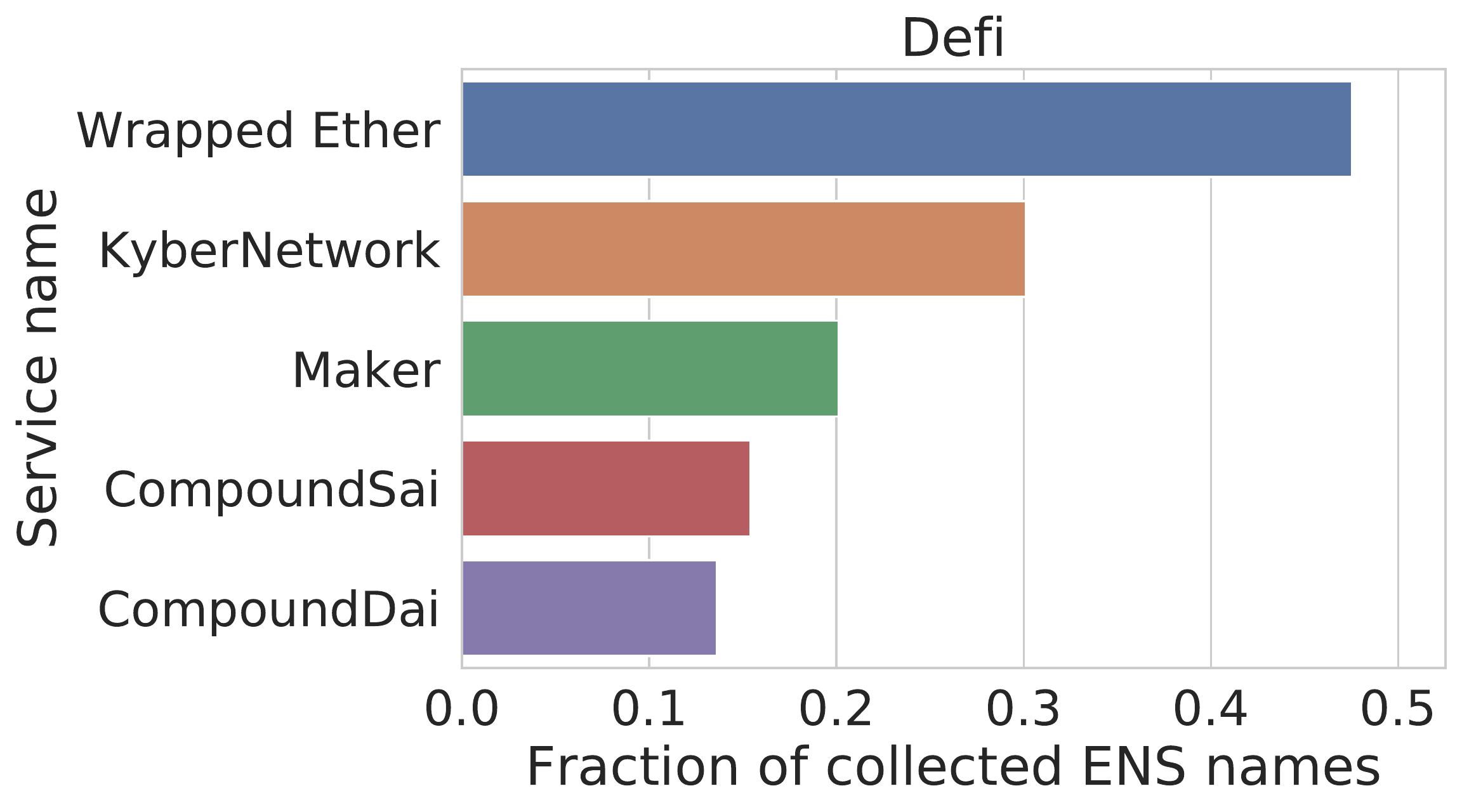}
    \includegraphics[width=0.3\textwidth]{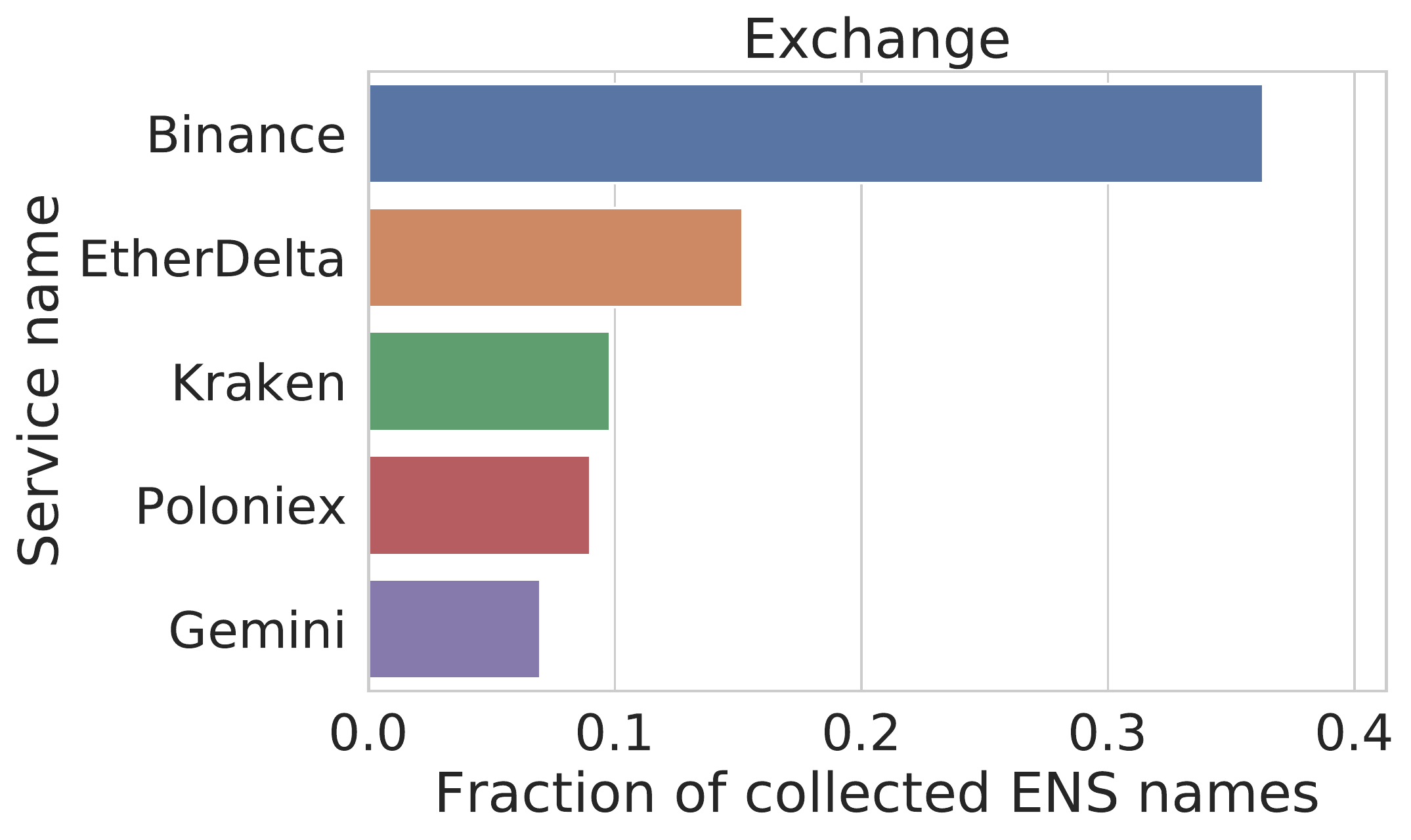}
    \includegraphics[width=0.3\textwidth]{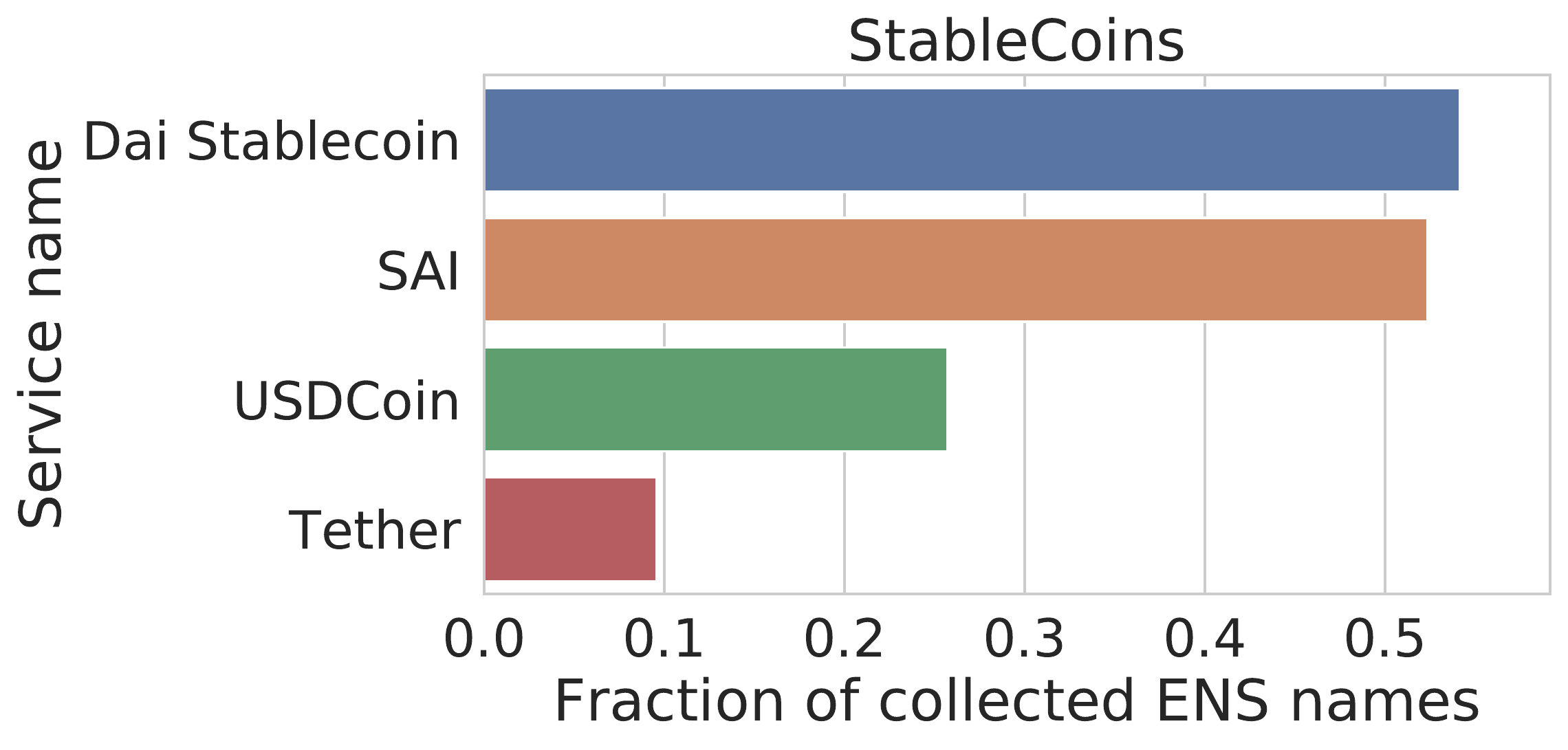}
    \includegraphics[width=0.3\textwidth]{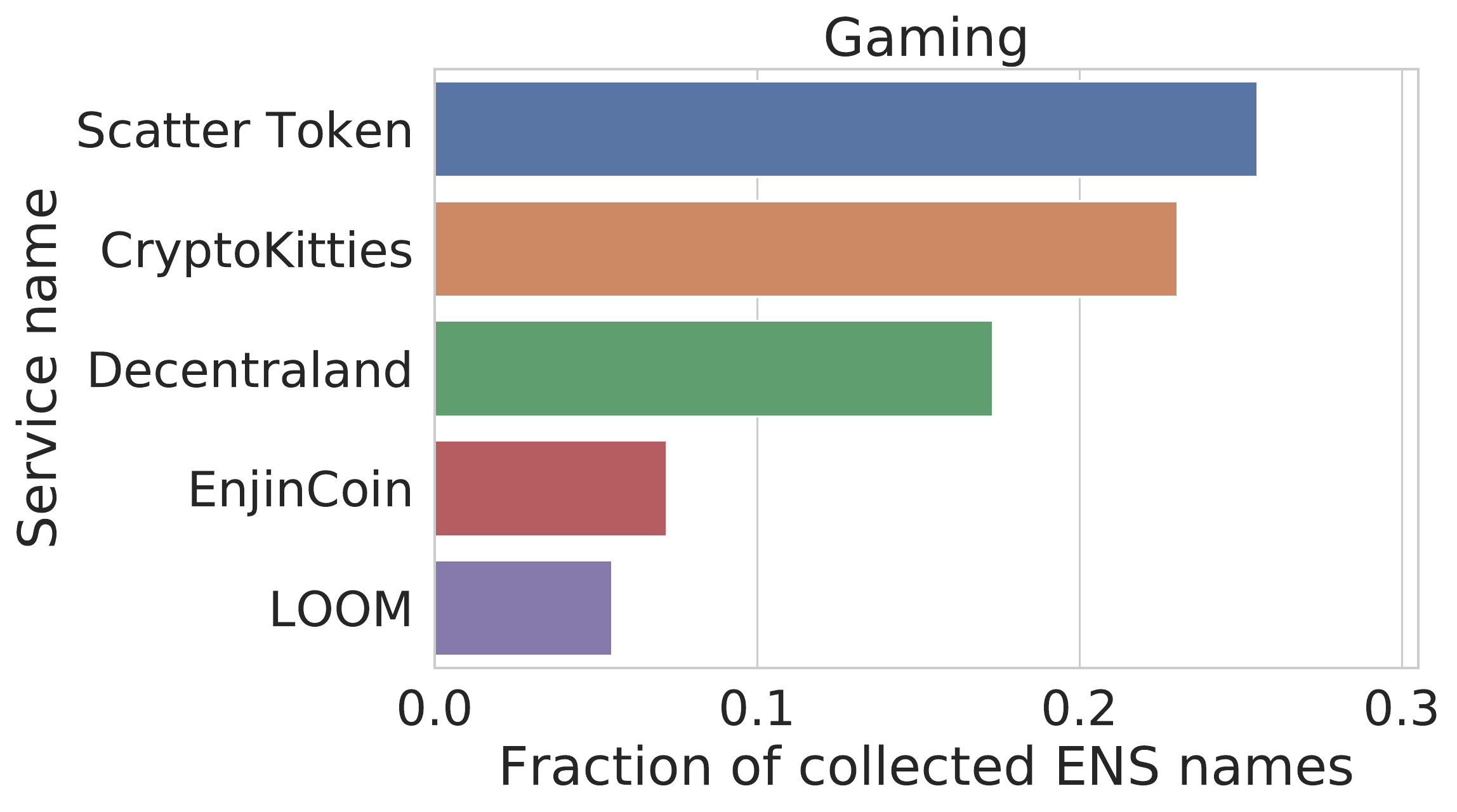}
    \caption{Most popular services within the Defi, Exchange, Stablecoins and Gaming categories in our data collection.}
    \label{fig:service_names}
\end{figure}

\section{Data collection} \label{sec:data}

We collected addresses presumably related to regular users and not automatic (trader or exchange) bots from the following publicly available data sources. \textbf{Twitter:} 
By using the Twitter API\footnote{Using the Twitter Search and People API endpoints, we collected tweets containing the following keywords \{'@ensdomains','.eth','ENS name','ENS address', 'ethereum', '\#ethereum'\} as well as profiles with an ENS name in their displayed profile name or description. We also searched for ENS names in the name and description of every tweeter in our data. Twitter data collection lasted from 2019-11-15 until 2020-03-05.}, we were able to collect 890 ENS names included in Twitter profile names or descriptions, and discover the connected Ethereum addresses, see Figure~\ref{fig:addr_cnt_per_ens}. 
\textbf{Humanity DAO:}\footnote{See: \url{https://www.humanitydao.org/humans}}A human registry of Ethereum users, which can include a Twitter handle in addition to the Ethereum address. \textbf{Tornado Cash mixer contracts:} We collected all Ethereum addresses that issued or received transactions from Tornado Cash mixers up to 2020-04-04. 
Table~\ref{tab:address_sources} shows the total number of addresses collected from each data source as well as addresses with at least $5$ sent transactions. We note that there are overlaps between the three address groups, see the last row of Table~\ref{tab:address_sources}.

By using the Etherscan blockchain explorer API, we collected 1,155,188 transactions sent or received by the addresses in our collection. The final transaction graph contains 159,339 addresses. The transactions span from 2015-07-30 until 2020-04-04.
Figure~\ref{fig:avg_txs} shows the average number of transactions sent and received in the three data sources. Addresses collected from Twitter and Humanity DAO have similar behavior, while Tornado accounts have fewer transactions since Tornado Cash has only recently been launched.

Finally, using the Etherscan Label Word Cloud, we manually collected service category labels (e.g.\ exchange, gambling, stablecoins) related to popular addresses in our data set. We summarize the fraction of ENS names in our collection that interacted with the given services in Figure~\ref{fig:qualitativeaddressanalysis}. We observed that the publicly revealed ENS names already expose sensitive activities such as gambling and adult services. Therefore, users should avoid sensitive activities on addresses easily linkable to their public identities, such as ENS name or their Twitter handle.

\begin{table}
    \centering
    \begin{tabular}{l|r|r|r}
         \textbf{Source} & \textbf{Total} & \textbf{At least $5$} & \textbf{Used as ground} \\
         &  & \textbf{sent txs} & \textbf{truth pairs}   \\
         \hline
         Twitter & 1364 & 1260 & 129 \\
         Tornado Cash & 2361 & 1618 & $^*$189 \\
         Humanity-Dao & 695 & 602 & n/a \\
         \hline
         All & 4259 & 3321 & 318 \\
    \end{tabular}
    \caption{Number of Ethereum addresses collected from three different sources. $^*$Tornado ground truth pairs are only heuristically identified, see Section~\ref{sec:heuristics}. Due to overlaps between the data sources, the total number of investigated addresses is less than the sum of the records in the top three rows.}
    \label{tab:address_sources}
\end{table}

\begin{figure}
    \centering
    \includegraphics[width=0.3\textwidth]{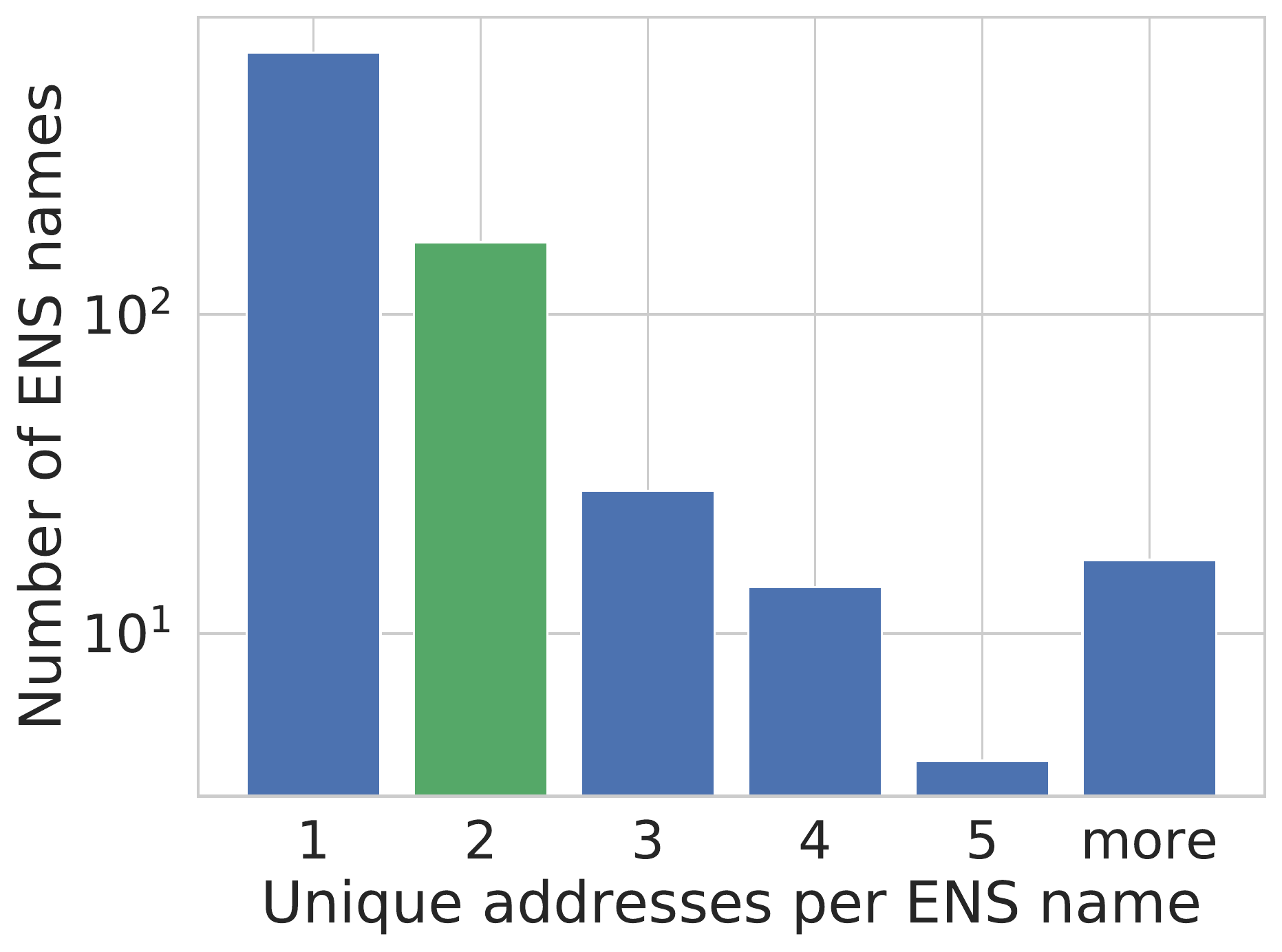}
    \caption{Unique address count of ENS names collected from Twitter. Most of the ENS names in our collection are linked to a single Ethereum address, while some entities use multiple accounts. In Section~\ref{sec:pairing}, we use ENS names with exactly two unique addresses \textbf{(green)} to measure the performance of different profiling techniques.}
    \label{fig:addr_cnt_per_ens}
\end{figure}

\begin{figure}
    \centering
    \includegraphics[width=0.3\textwidth]{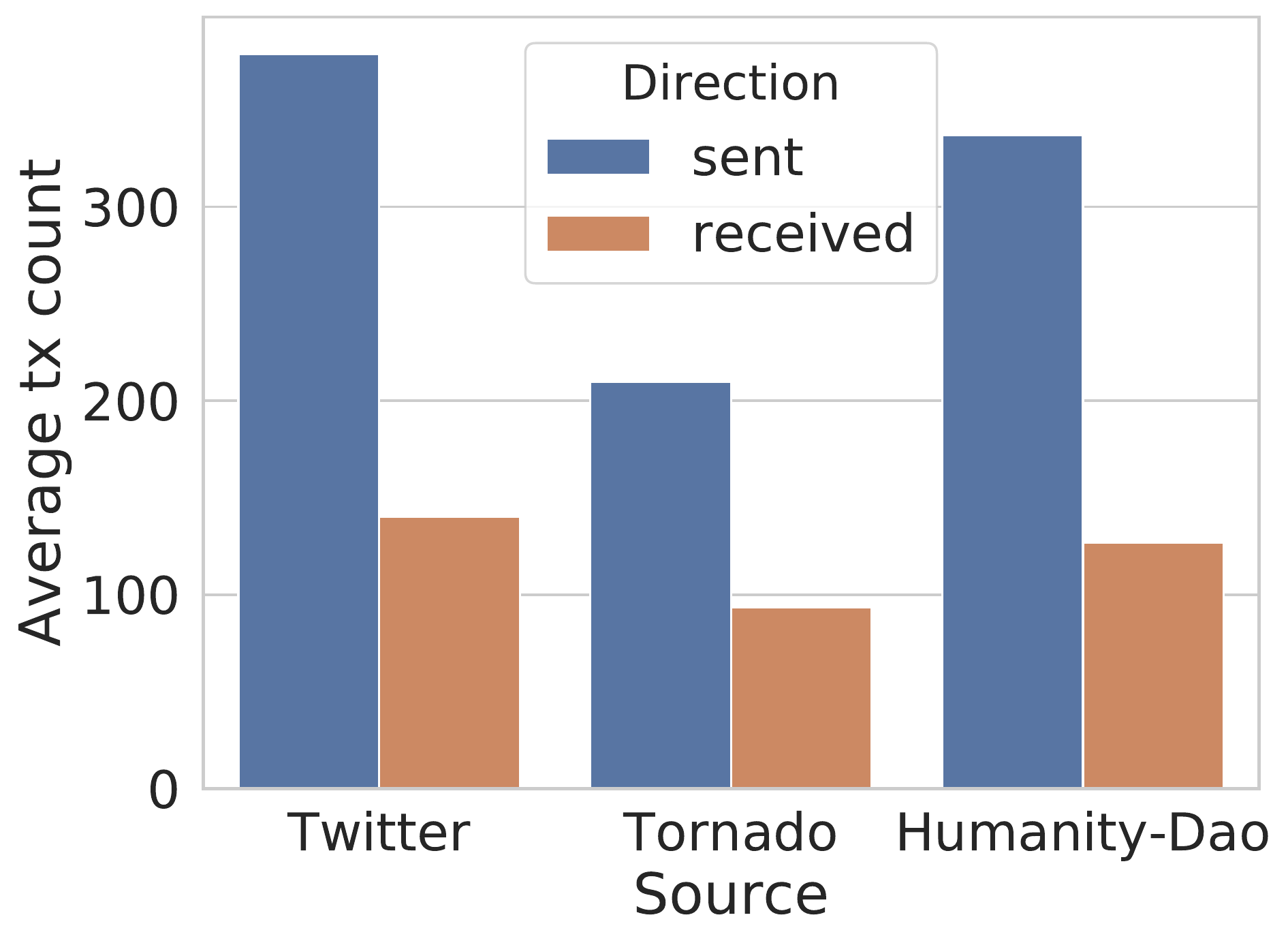}
    \caption{Average number of transactions sent or received by the  addresses of each data source. Tornado accounts have less transactions as the service has only recently been launched.}
    \label{fig:avg_txs}
\end{figure}

\section{Evaluation measures}\label{sec:evaluationmeasures}

In this paper, we propose deanonymization methods for pairing Etherum accounts of the same user (Section~\ref{sec:pairing}), Tornado deposits and withdrawals (Section~\ref{sec:deanonymising}), and fingerprinting accounts (Section~\ref{sec:danaangift}).
To establish an appropriate measure for evaluating our methods, we face the diversity and complexity of estimates of the adversary’s success to
breach privacy. In the literature, the adversary's output takes the form of a posterior probability distribution, see the survey~\cite{wagner2018technical}.

The simplest metrics consider the success rate of a deanonymizing adversary. Metrics such as accuracy, coverage, fraction of correctly identified nodes~\cite{backstrom2007wherefore,narayanan2009anonymizing,narayanan2011link} are applicable only when the attack has the potential to exactly identify a significant part of the network.

Exact identification is an overly ambitious goal in our experiments, which aim to use very limited public information to rank candidate pairs and quantify the leaked information as risk for a potential systematic deanonymization attack.
For this reason, we quantify non-exact matches, since even though our deanonymizing tools might not exactly find a mixing address, they can radically reduce the anonymity set, which is still harmful to privacy.
We want to quantify the information leaked from network structure,  time-of-day activity, and gas price usage to assess the implications for the \emph{future privacy}~\cite{narayanan2008robust} of the account owners.

In our first two deanonymization experiments, our algorithms will return a ranked list of candidate pairs for each account in our testing set.
Based on the ranked list, we
propose a simple metric, the \textbf{average rank} of the target in the output.

Recent results consider deanonymization as a classification task and use AUC for evaluation~\cite{ma2017anonymizing}. In our experiments, we will compute AUC by the following claim:

\begin{lemma}
Consider a set of accounts $a$, each with a set of candidate pairs $c(a)$ such that exactly one in $c(a)$ is the correct pair of $a$. Let an algorithm return a ranked list of all sets $c(a)$.  The AUC of this algorithm is equal to the average of $r(a)/|c(a)|$ over all $a$, where $r(a)$ is the rank of the correct pair of $a$ in the output.
\end{lemma}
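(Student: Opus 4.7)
The plan is to reduce the AUC computation to a per-query counting argument and then aggregate by averaging. First, I would invoke the Mann--Whitney characterization of AUC: for a binary classification problem it equals the probability that a uniformly chosen positive is scored at least as high as a uniformly chosen item from the candidate pool, with ties resolved by the appropriate convention. In the present setting, the positives are precisely the correct pairs, one per account $a$, and the candidate pool for $a$ is exactly $c(a)$, so this probability decomposes over the accounts.

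Next, fixing the convention that rank $1$ denotes the lowest score and rank $|c(a)|$ the highest, the correct pair of $a$ sits at position $r(a)$ within the ranking of $c(a)$. A uniformly random element of $c(a)$ then has rank at most $r(a)$ with probability exactly $r(a)/|c(a)|$, which under the Mann--Whitney reading (with the self-tie at the positive counted as a win) is the per-query AUC. Since the algorithm produces the rankings of the sets $c(a)$ independently and the global AUC is obtained by uniform averaging over the accounts, a straightforward linearity-of-expectation step expresses the overall AUC as the arithmetic mean of these per-query quantities, which is the claimed formula.

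The step that demands care is pinning down the tie-breaking and normalization convention: an alternative reading normalizes by $|c(a)|-1$ rather than $|c(a)|$ and yields $(r(a)-1)/(|c(a)|-1)$ per query. The lemma implicitly adopts the former, which amounts to including the positive itself in the comparison pool and treating the self-comparison as a win; this is the only nontrivial modelling choice. Once that convention is fixed, the remainder of the proof is an elementary count of items at or below rank $r(a)$ in $c(a)$, followed by averaging.
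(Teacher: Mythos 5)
Your argument follows essentially the same route as the paper's own (one-line) proof: both rest on the Mann--Whitney/Hanley--McNeil reading of AUC as the probability that the correct pair is ranked above an incorrect candidate, decomposed per account and averaged. You are in fact more explicit than the paper, which silently glosses over exactly the normalization/tie convention ($r(a)/|c(a)|$ versus $(r(a)-1)/(|c(a)|-1)$) that you identify as the only nontrivial modelling choice, so your write-up is a correct and slightly more careful version of the same proof.
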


\begin{proof}
  Follows since AUC is the probability that a randomly selected correct record pair is ranked higher than another incorrect one~\cite{hanley1982meaning}.
\end{proof}
  

Finally, we consider evaluation by variants of entropy, which quantify privacy loss by the number of bits of additional information needed to identify a node.
Defining entropy is difficult in our case for two reasons. First, our algorithms provide a ranked list and not a probability distribution. Second, for the Tornado Cash mixer deanonymization, the anonymity set size is dynamic, as users can freely deposit anytime they wish, hence increasing the anonymity set size.

In the literature, entropy based evaluation considers the a priori knowledge without a deanonymization method and the a posteriori knowledge after applying one~\cite{serjantov2002towards}. Several papers compute the entropy of the a posteriori knowledge~\cite{serjantov2002towards,diaz2002towards,narayanan2008robust}, however they assume that the deanonymizer outputs a probability distribution of the candidate records~\cite{narayanan2008robust}. 


The information the attacker has learned with the attack can be expressed
as the difference of the a priori and a posteriori entropy. We call this difference the \textbf{entropy gain}, denoted as gain$(n, p)$ where $n$ and $p$ are the anonymity set size and probability distribution, respectively.
The a priori entropy of
the target record is typically the base-2 logarithm of the a priori anonymity set size.  
The problem with varying a priori anonymity set size is that while correctly selecting ten candidate users from a pool of a million is a great achievement, the same entropy of $\log_2(10)$ is achieved without deanonymization if the initial pool size, for example in a low-utilization mixer, is only 10.
We note that in~\cite{diaz2002towards}, the authors also divide the entropy gain to normalize the value.


Next, we describe a new method to infer the a posteriori distribution given varying a priori knowledge and appropriately normalize with respect to the a priori entropy.  More precisely, first we give a heuristic argument that the a priori anonymity set size has little effect on the entropy gain, and hence we can compare and average across different measurements. In the formula below, given an a priori anonymity set size $2n$ vs.~$n$, we compare the entropy gain of the same distribution $p$, gain$(2n,p)-{}$gain$(n,p)$. In the formula below, $p_i$ denotes the probability $p([(i-1)/(2n),i/(2n)])$.
\begin{eqnarray} 
\nonumber
    \text{gain}(2n,p) &=& \log_2(2n)+ \sum_{i=1}^{2n} p_i \log_2(p_i);\\
\nonumber
    \text{gain}(n,p) &=& \log_2(n)+ \sum_{i=1}^{n} (p_{2i-1} + p_{2i}) \log_2(p_{2i-1} + p_{2i}).
    \label{eq:entropy}
\end{eqnarray} 
Since $\log_2(2n)-\log_2(n) = 1 = \sum_i p_i$, we may group the terms to obtain the difference in the entropy gain as the sum for $1\le i \le n$ of
\begin{equation}
p_{2i-1} \log_2\left(\frac{2p_{2i-1}}{p_{2i-1}+ p_{2i}}\right) + p_{2i} \log_2\left(\frac{2p_{2i}}{p_{2i-1}+ p_{2i}}\right),
\end{equation}
which can be bounded from above by using $\log x <x-1$ as
\begin{equation}
    \frac{(p_{2i-1} - p_{2i})^2}{p_{2i-1} + p_{2i}}.
\end{equation}
If the probability distribution is smooth with little density changes in a neighborhood, the above value is very small. For example, the value is small if $p_i$ is monotonic in $i$, which at least approximately holds in our experiments.

Based on the above argument, we may infer an  \textbf{empirical probability distribution} of the candidates ranked by an algorithm. For each a priori size $n$ and rank $r$ for the ground truth pair of a target record, we define the distribution $P(n,r)$ to be uniform in $[(r-1)/n, r/n]$, and 0 elsewhere, in accordance with formula~(\ref{eq:entropy}). The empirical probability distribution of an algorithm will be the average of $P(n,r)$ over all the output of the algorithm.  In the discussion, we will use the \textbf{entropy gain} of the above empirical probability distribution to quantify the deanonymization power of our algorithms.




\section{Linking Ethereum accounts of the same user}\label{sec:pairing}

In this section, we introduce our approach to identify pairs of Ethereum accounts that belong to the same user. In our measurements, we investigate three quasi-identifiers of the account owner: the active time of the day, the gas price selection, and the location in the Ethereum transaction graph.

We evaluate our methods by using the set of address pairs in our collection that belong to the same name in the Ethereum Name Service (ENS), see  Figure~\ref{fig:addr_cnt_per_ens}. We consider 129 ENS names with exactly two Ethereum addresses to avoid the possible validation bias caused by ENS names with more than two addresses. We also note that Ethereum addresses connected to multiple ENS names were excluded from our experiments.

\subsection{Time-of-day transaction activity}

\begin{figure}[t]
    \centering
    \includegraphics[width=0.3\textwidth]{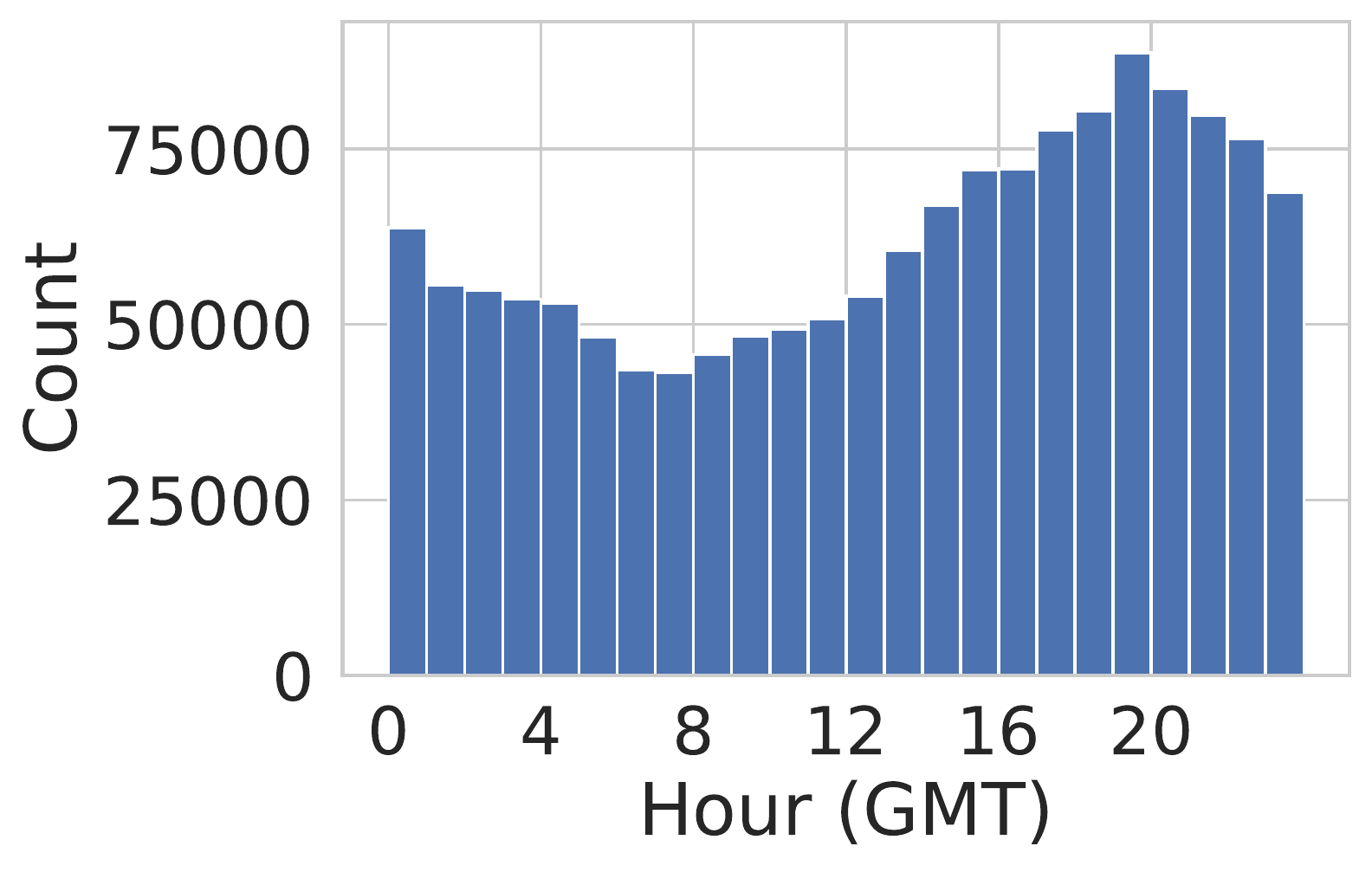}
    \caption{Time-of-day distribution of Ethereum transactions}
    \label{fig:acitivityHist}
\end{figure}

\begin{figure}[t]
    \centering
    \includegraphics[width=0.35\textwidth]{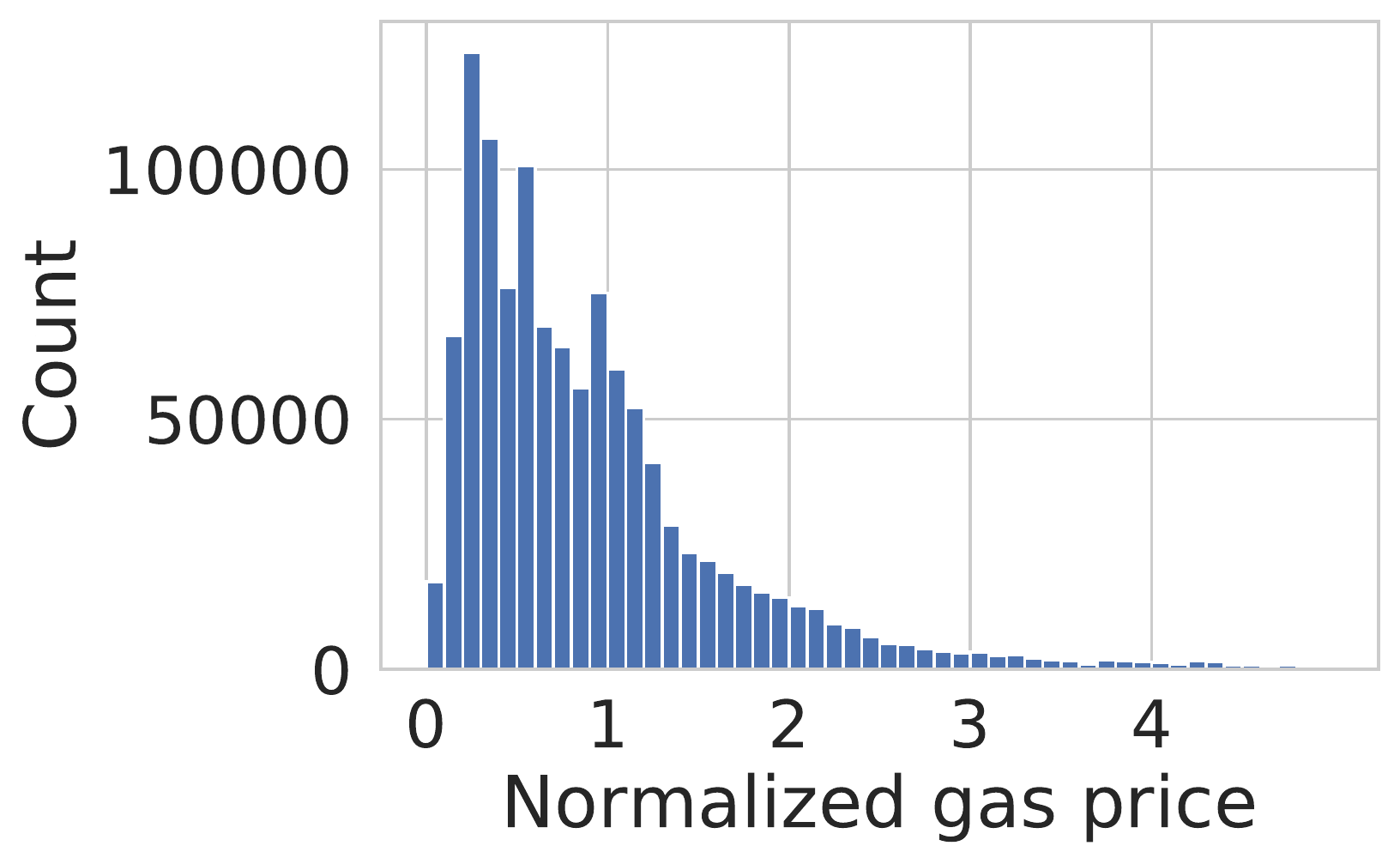}
    \caption{Normalized gas price distribution of Ethereum transactions. Outliers above $5$ are omitted.}
    \label{fig:normalizedGasPrice}
\end{figure}

Ethereum transaction timestamps reveal the daily activity patterns of the account owner, see Figure~\ref{fig:acitivityHist}. 
In the top row of Figure~\ref{fig:ethprofile}, we show time-of-day profiles for two ENS names that are active in different time zones.

Given the set of timestamps, an account is represented by the vector including the mean, median and standard deviation, as well as the time-of-day activity histogram divided into $b_{\text{hour}}$ bins.

\subsection{Gas price distribution}
Ethereum transactions also contain the gas price, which is usually automatically set by wallet softwares. Users rarely change this setting  manually. Most wallet user interfaces offer three levels of gas prices, slow, average, and fast where the fast gas price guarantees almost immediate inclusion in the blockchain.

The changes in daily Ethereum traffic volume sometimes cause temporary network  congestion, which affect user gas prices. Hence we normalized the gas price by the daily network average. In Figure~\ref{fig:normalizedGasPrice}, the two peaks of the normalized gas price around $0.5$ and $1$ correspond to the slow and average gas price options. On the other hand, users only occasionally charge more than three times the daily average gas price. The combination of these gas price levels forms the so-called gas price profile for each Ethereum user.

Given the normalized gas prices of the transactions sent, an account is represented by the vector including the mean, median and standard deviation, as well as the normalized gas price histogram divided into $b_{\text{gas}}$ bins.

\begin{figure}
    \includegraphics[width=0.45\textwidth]{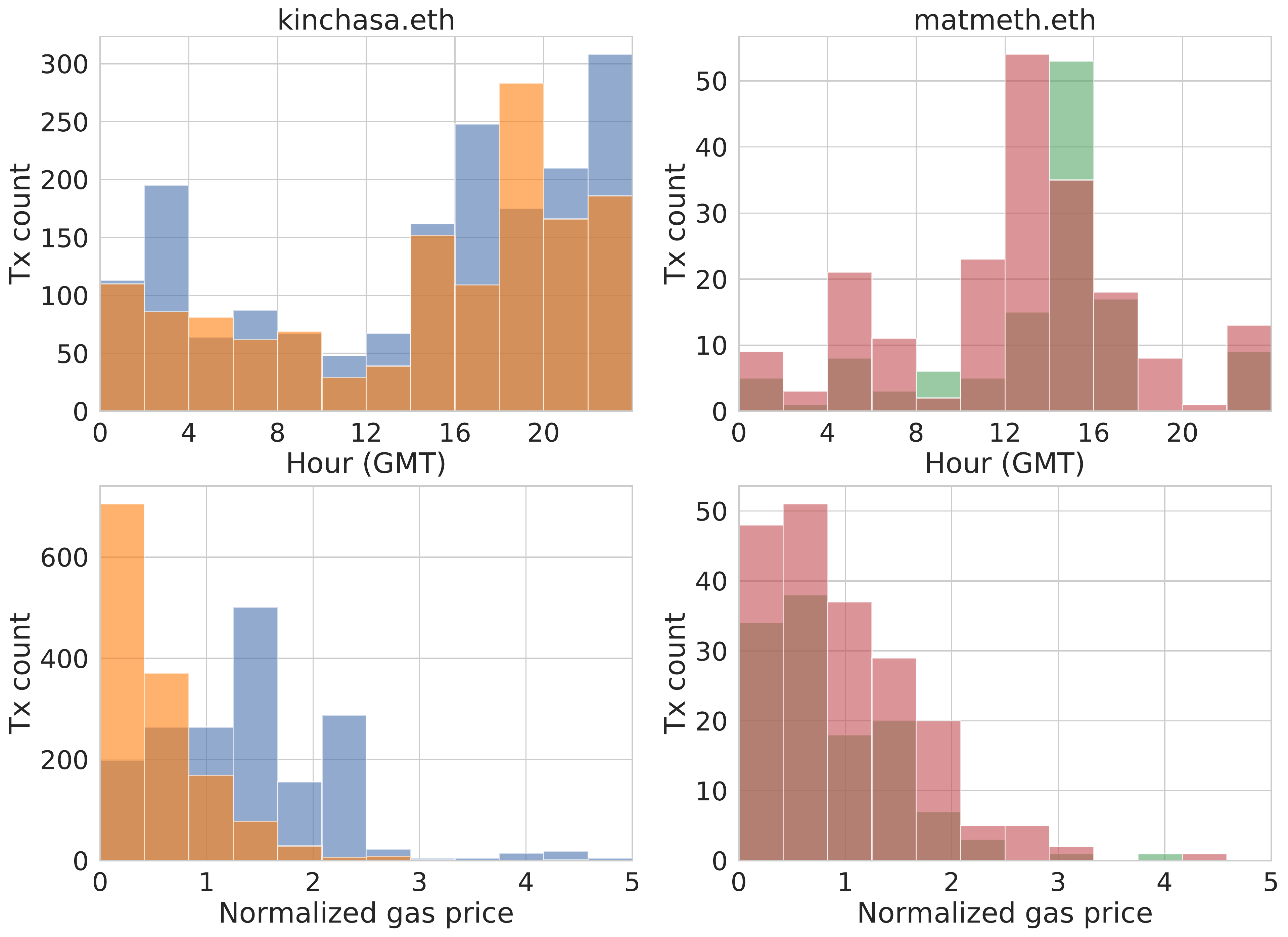}
    \caption{Time-of-day and normalized gas price profiles for two ENS names with a pair of addresses each. Both the time-of-day and gas price selection are similar in case of matmeth.eth addresses (red, green) while the addresses of kinchase.eth (blue, orange) have different gas price profiles. Addresses are denoted by different colors.}
    \label{fig:ethprofile}
\end{figure}

\subsection{Transaction graph analysis}
\label{sec:node_embeddings}
The set of addresses used in interactions characterize a user.  Users with multiple accounts might interact with the same addresses or services from most of them. Furthermore, as users move funds between their personal addresses, they may unintentionally reveal their address clusters. 

Our deanonymization experiments are conducted on a transaction graph with nodes as Ethereum addresses and edges as transactions.  From the library\footnote{https://github.com/benedekrozemberczki/karateclub} of Rozenberczki et al.~\cite{karateclub2020}, we selected twelve node embedding methods~\cite{boostne, nodesketch, netmf, walklets, hope, grarep, deepwalk, nmfadmm, laplacian,graphwave, diff2vec} (see Section~\ref{sec:background_embeddings}) to discover address pairs that might belong to the same user. To the best of our knowledge, we are the first to apply node embedding for Ethereum user profiling.

To apply the selected library~\cite{karateclub2020}, certain preprocessing steps are required. First, we considered transactions as undirected edges and removed loops and multi-edges. We excluded nodes outside the largest connected component. Due to running time considerations, we also removed nodes with degree one. The resulting graph has 16,704 nodes and 132,231 edges. We generated $128$-dimensional representations for the addresses. In order to compare with timestamp and gas price representations, we assign the overall average of the network embedding vectors to the removed nodes.

\subsection{Evaluation}

Based on timestamp, gas price distributions or  network embedding, we generate Euclidean feature vectors for $3321$ Ethereum addresses with each having at least five transactions sent, see Table~\ref{tab:address_sources}.
Given a target address, we order the remaining addresses by their Euclidean distance from the target.

In the evaluation, we use 129 address pairs that belong to the same ENS name. The accuracy metrics of Section~\ref{sec:evaluationmeasures} for identifying accounts of the same user by using only time-of-day activity or  normalized gas price is given in Figures~\ref{fig:granularity-avgrank}--\ref{fig:granularity-entropy}.
While time-of-day representation works best with $b_{\text{hour}}=4$ to 6 (six to four hour long bins), normalized gas price representation performs weaker and the related histogram gives only very small improvement with $b_{\text{gas}}=50$ over mean, median and standard deviation.

The performance of the twelve different node embedding algorithms is shown in Figures~\ref{fig:node_embeddings-avgrank}--\ref{fig:node_embeddings-entropy} based on ten independent experiments.
The two best performing  methods are Diff2Vec~\cite{diff2vec} and Role2Vec~\cite{role2vec}. Note that these algorithms capture different aspects of the same graph as Diff2Vec is a neighbourhood preserving and Role2Vec is a structural node embedding. We achieved best Ethereum address linking performance by combining these two methods by the harmonic average of their rank.

In Figure~\ref{fig:ens_performance}, we show the fraction of pairs where the rank of the ground truth pair is not more than a given value. Surprisingly, Diff2Vec and Role2Vec find the corresponding ENS address pairs within $100$ closest representations by almost $20\%$ more likely than time-of-day activity and gas price statistics. Our combination based approach  further improves the performance.

Our results show that the proposed profiling techniques link Ethereum addresses of the same user  significantly better than random guessing. More precisely, the combination of Diff2Vec and Role2Vec yield $1.6$ bits of additional information on account owners, see Figure~\ref{fig:node_embeddings-entropy}. In other words, we can reduce the anonymity set of a particular address by a factor of $2^{1.6}\approx 3.0314$.

\begin{figure}
    \centering
    \includegraphics[width=0.4\textwidth]{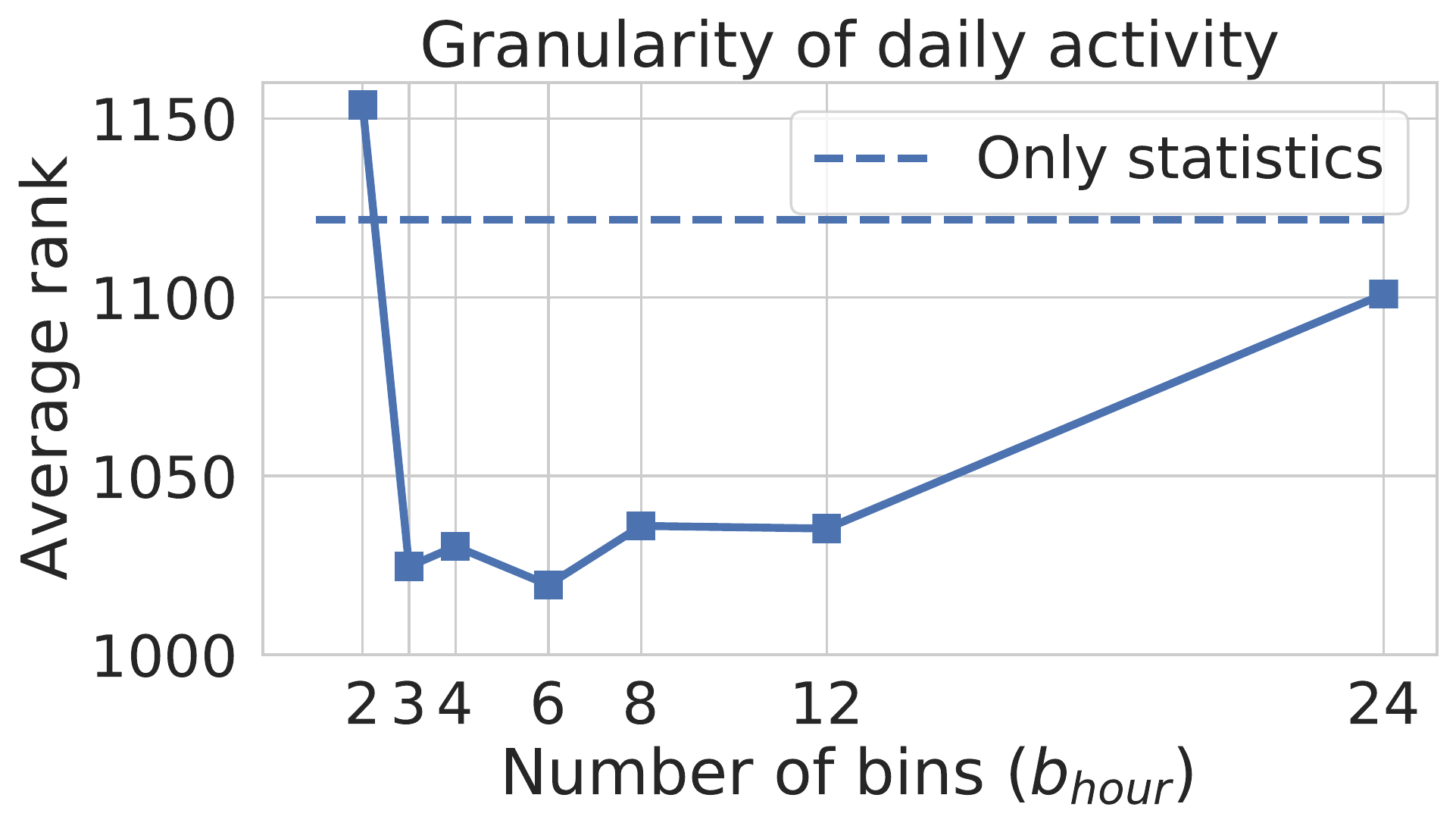}
    \includegraphics[width=0.4\textwidth]{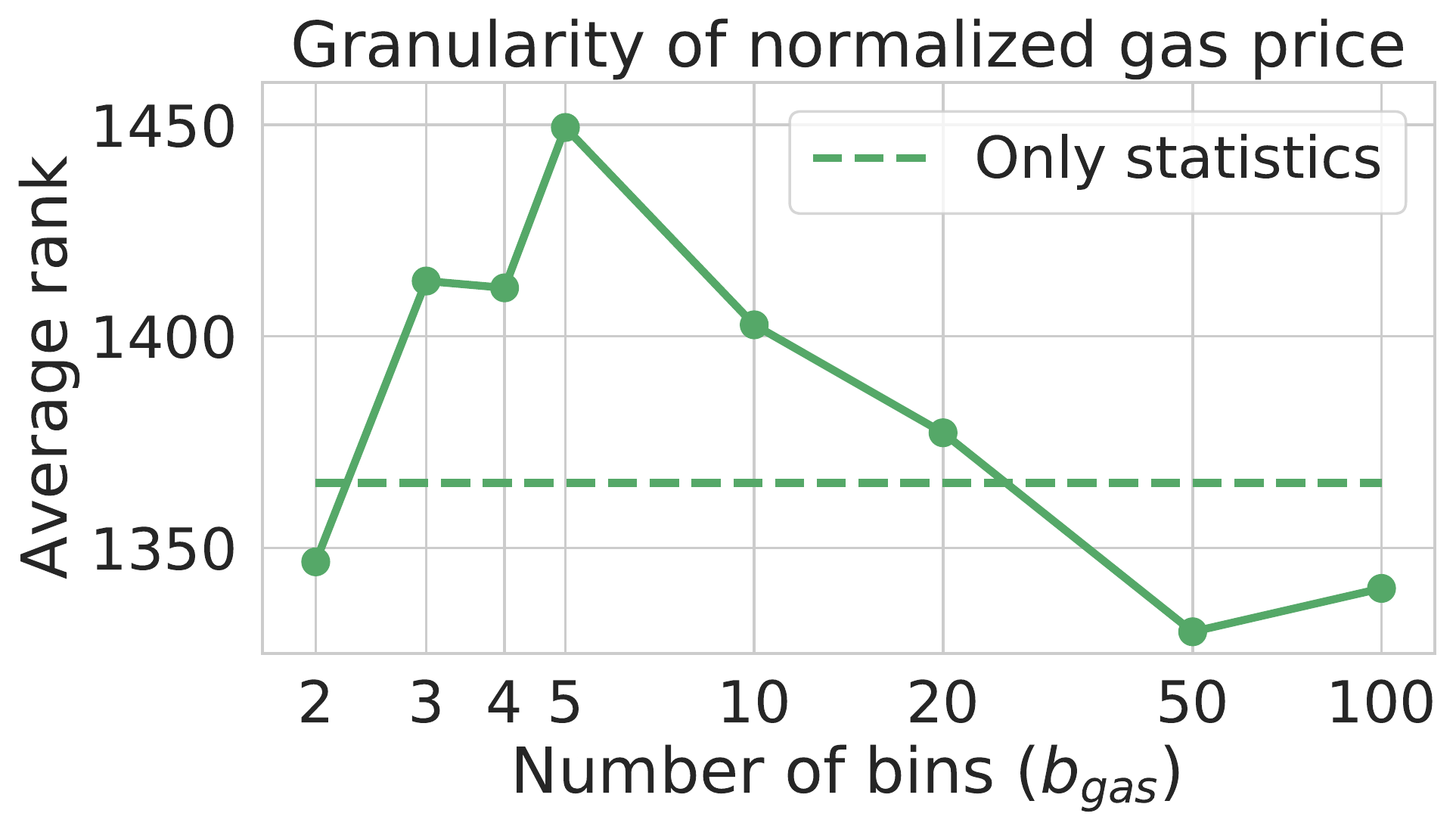}
    \caption{Average rank at different granularity for daily activity \textbf{(top)} and normalized gas price \textbf{(bottom)}. \textbf{Dashed lines} show performance with only mean, median and standard deviation used. Note that the maximum rank is $3321$, the total number of Ethereum addresses considered in this experiment.}
    \label{fig:granularity-avgrank}
\end{figure}

\begin{figure}
    \centering
    \includegraphics[width=0.4\textwidth]{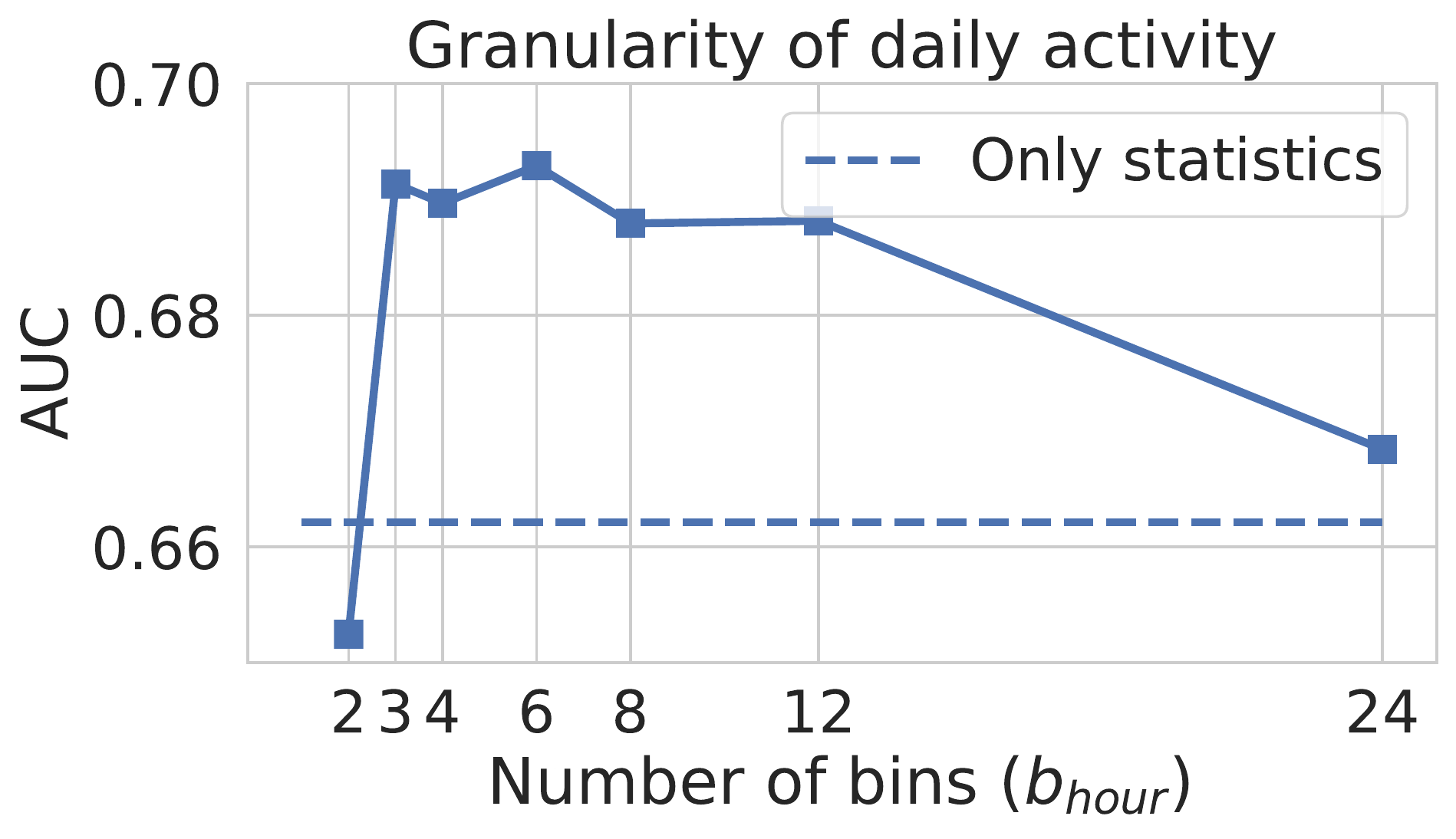}
    \includegraphics[width=0.4\textwidth]{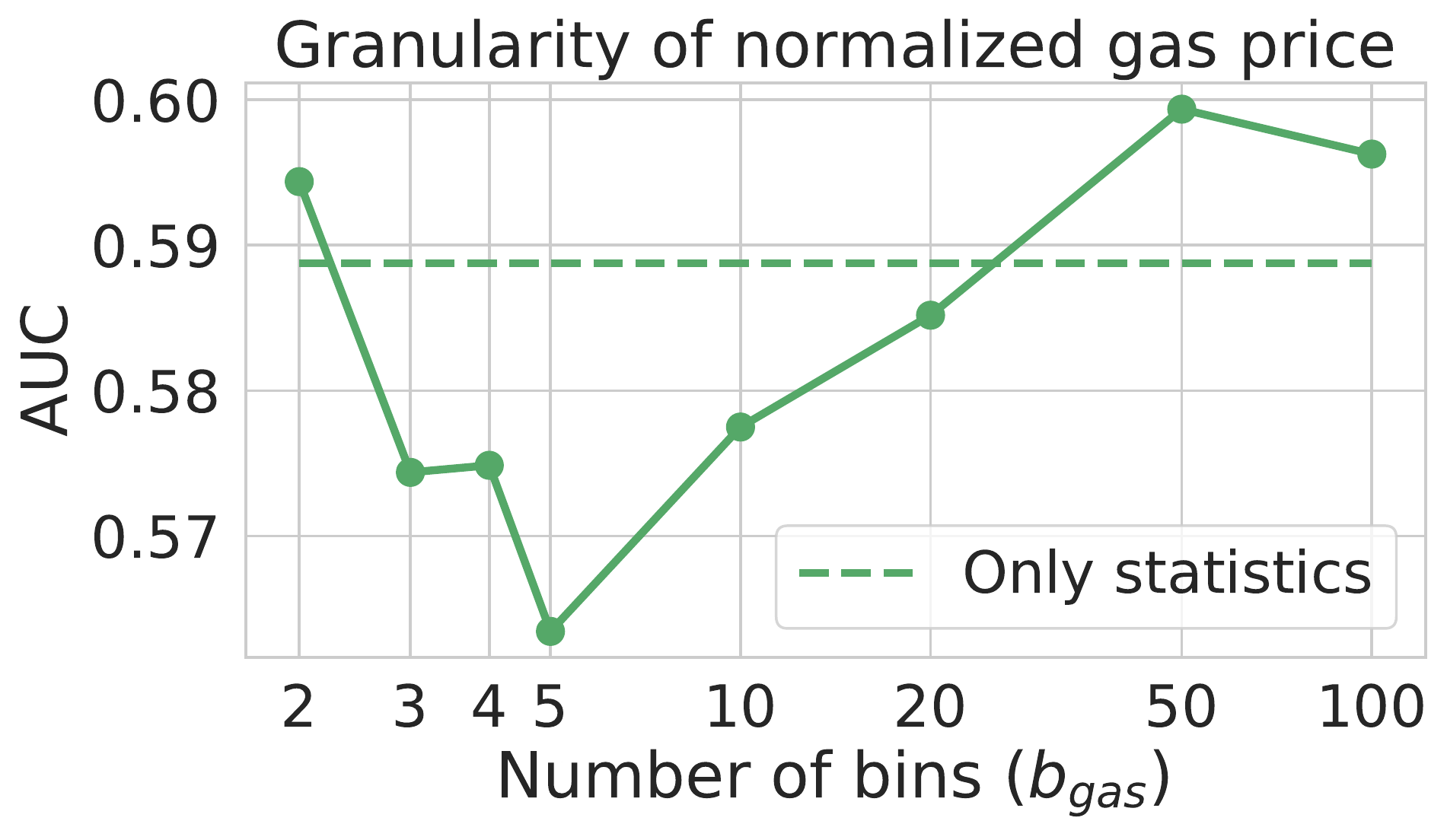}
    \caption{AUC at different granularity for daily activity \textbf{(top)} and normalized gas price \textbf{(bottom)}. \textbf{Dashed lines} show performance with only mean, median and standard deviation used.}
    \label{fig:granularity-auc}
\end{figure}

\begin{figure}
    \centering
    \includegraphics[width=0.4\textwidth]{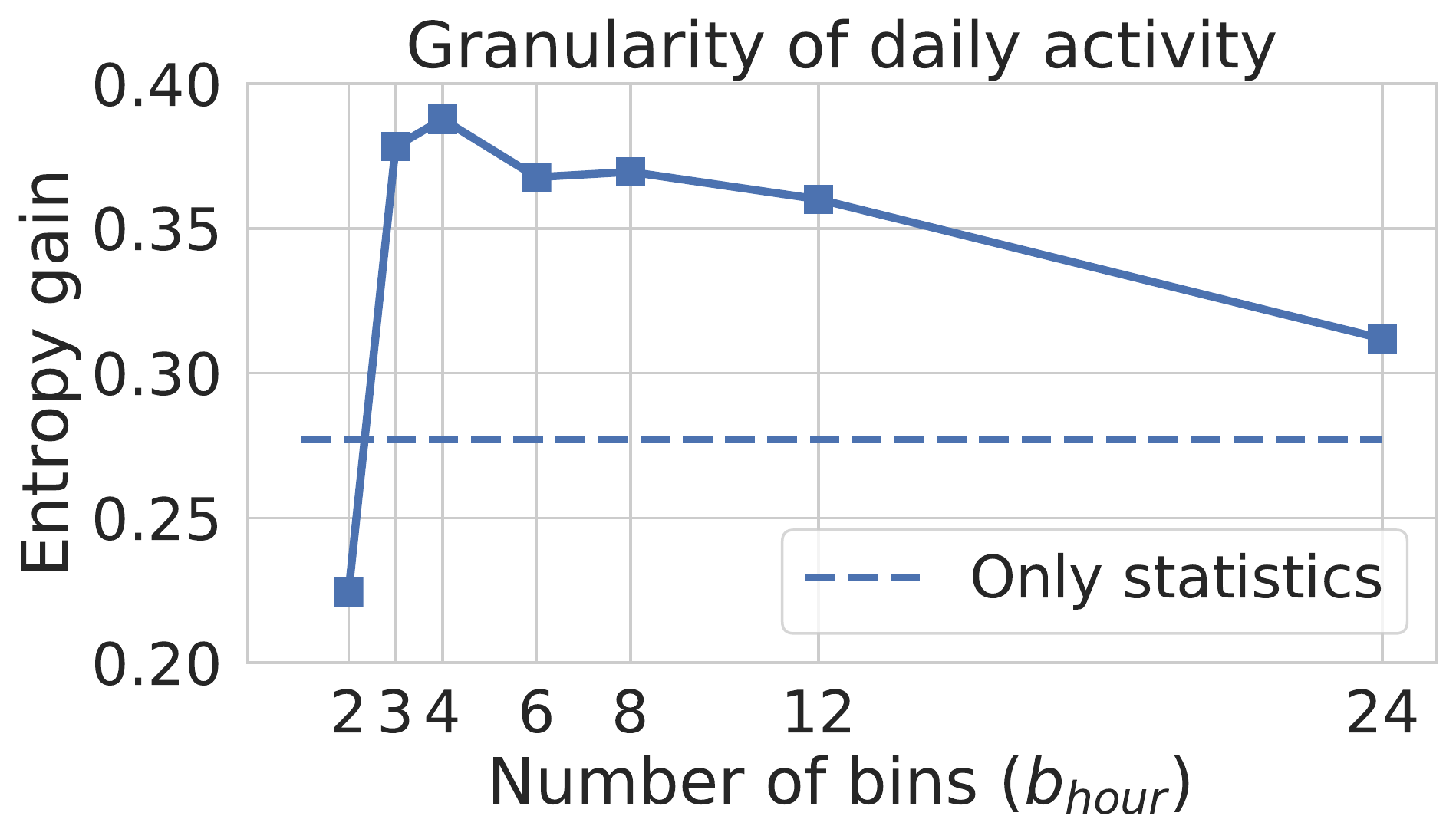}
    \includegraphics[width=0.4\textwidth]{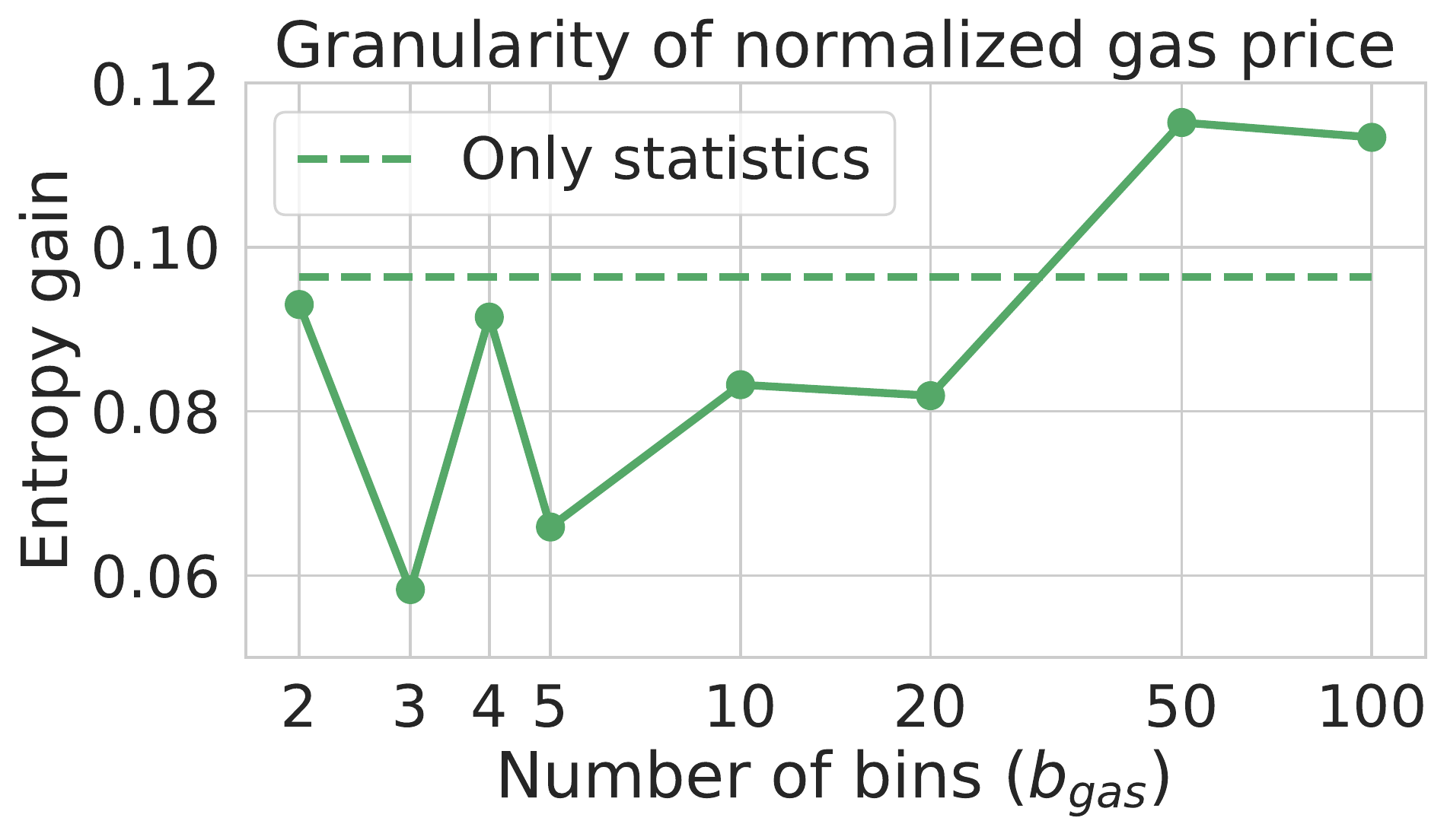}
    \caption{Entropy gain at different granularity for daily activity \textbf{(top)} and normalized gas price \textbf{(bottom)}. \textbf{Dashed lines} show performance with only mean, median and standard deviation used.}
    \label{fig:granularity-entropy}
\end{figure}

\begin{figure}
    \centering
    \includegraphics[width=0.4\textwidth]{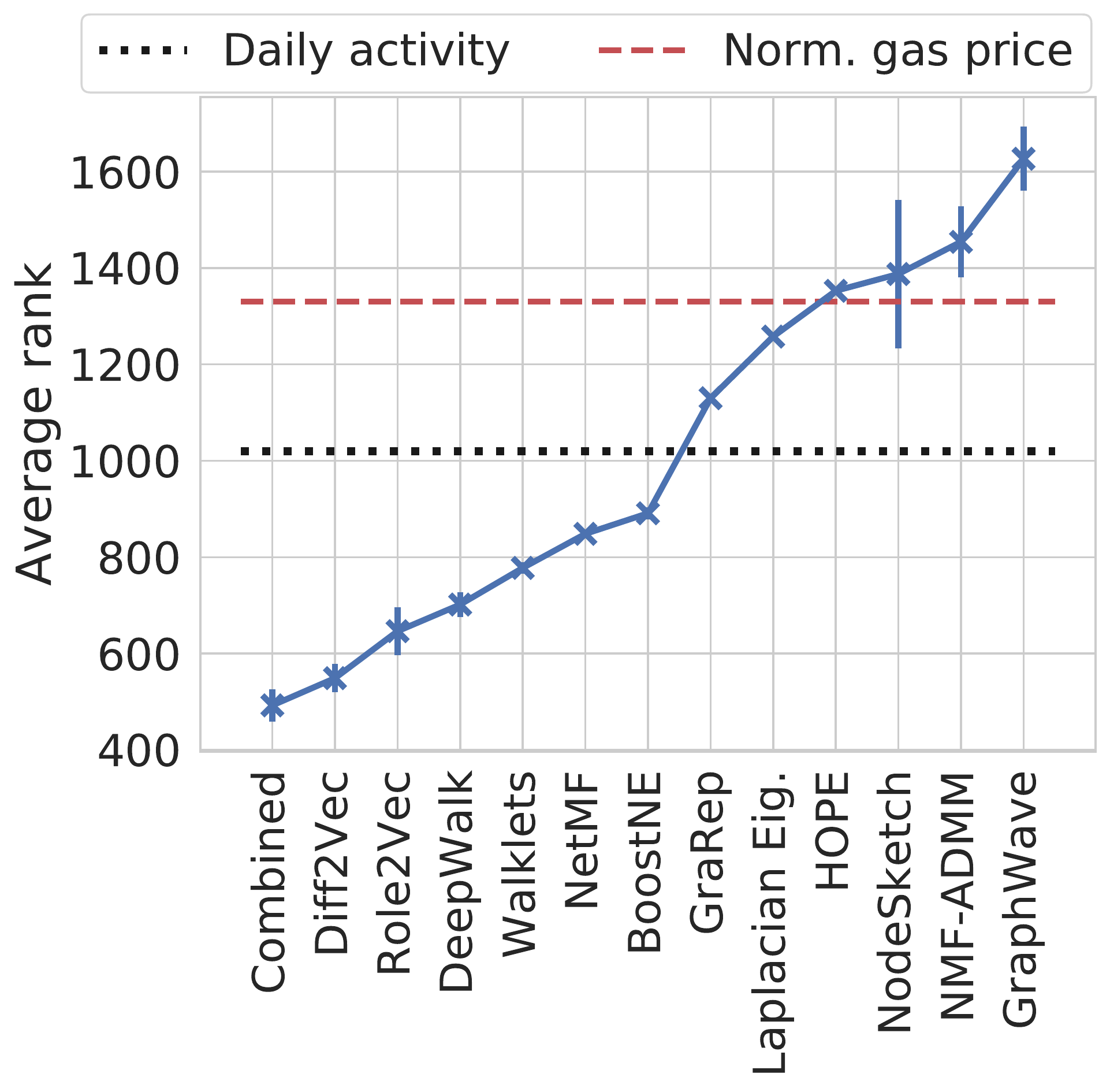}
    \caption{Average rank for node embedding methods. Vertical lines show  standard deviation in $10$ independent experiments. Reciprocal rank combination of Diff2Vec and Role2Vec gives the best performance. Note that the maximum rank is $3321$, the total number of Ethereum addresses considered in this experiment.}
    \label{fig:node_embeddings-avgrank}
\end{figure}

\begin{figure}
    \centering
    \includegraphics[width=0.4\textwidth]{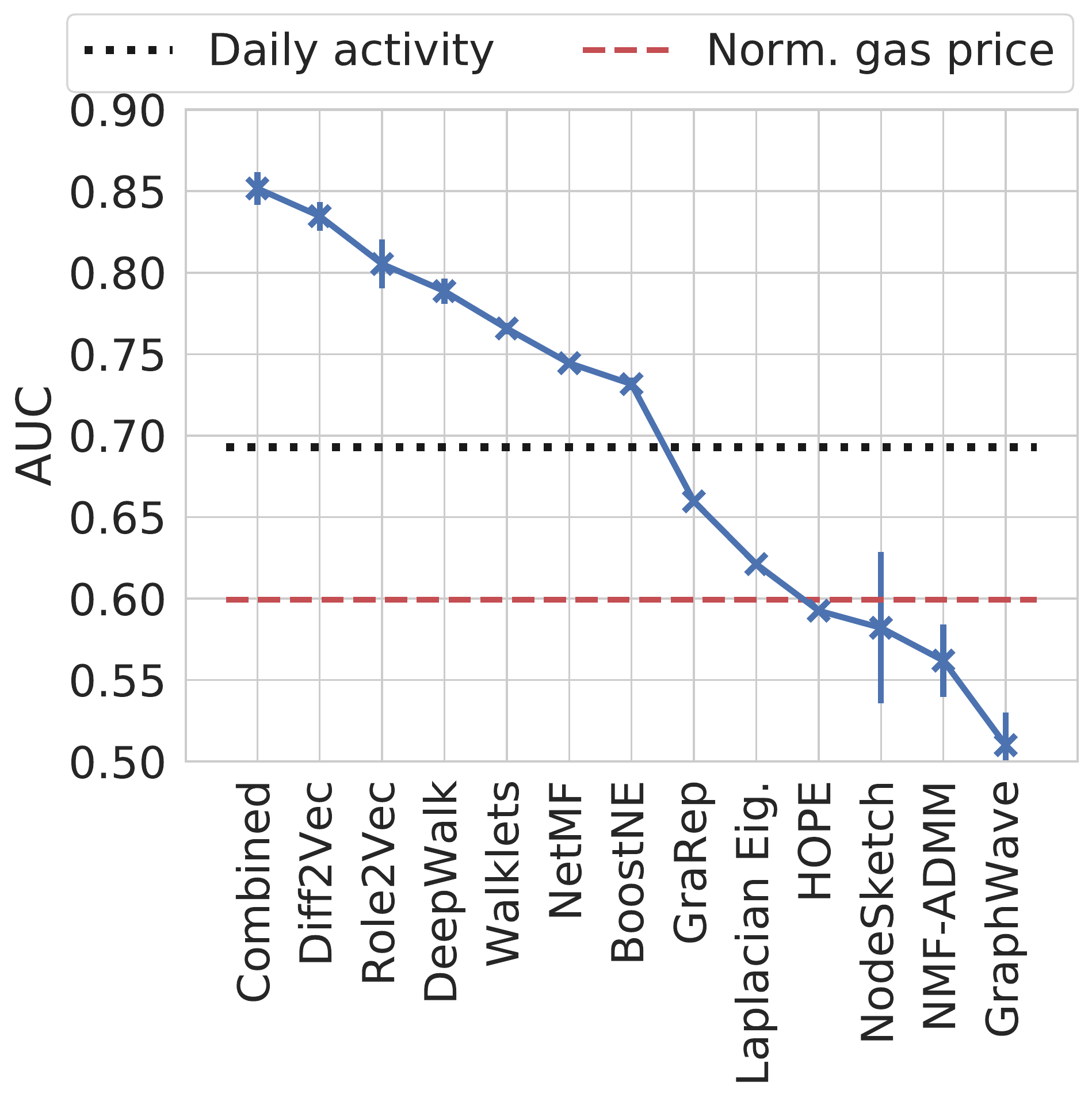}
    \caption{AUC for node embedding methods. Vertical lines show  standard deviation in $10$ independent experiments. Reciprocal rank combination of Diff2Vec and Role2Vec gives the best performance.}
    \label{fig:node_embeddings-auc}
\end{figure}

\begin{figure}
    \centering
    \includegraphics[width=0.4\textwidth]{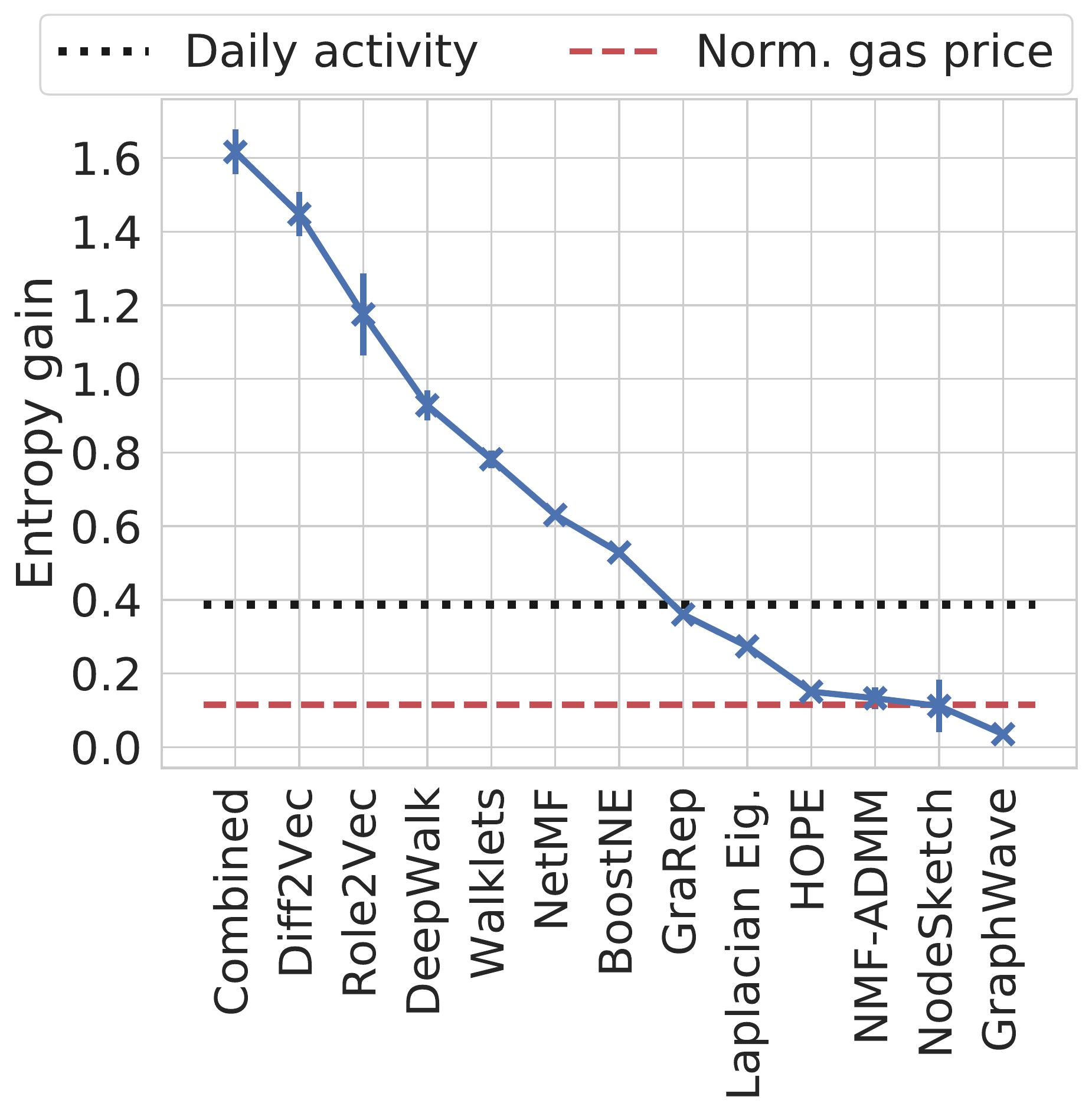}
    \caption{Entropy gain for node embedding methods. Vertical lines show  standard deviation in $10$ independent experiments. Reciprocal rank combination of Diff2Vec and Role2Vec gives the best performance.}
    \label{fig:node_embeddings-entropy}
\end{figure}

\begin{figure}
    \centering
    \includegraphics[width=0.4\textwidth]{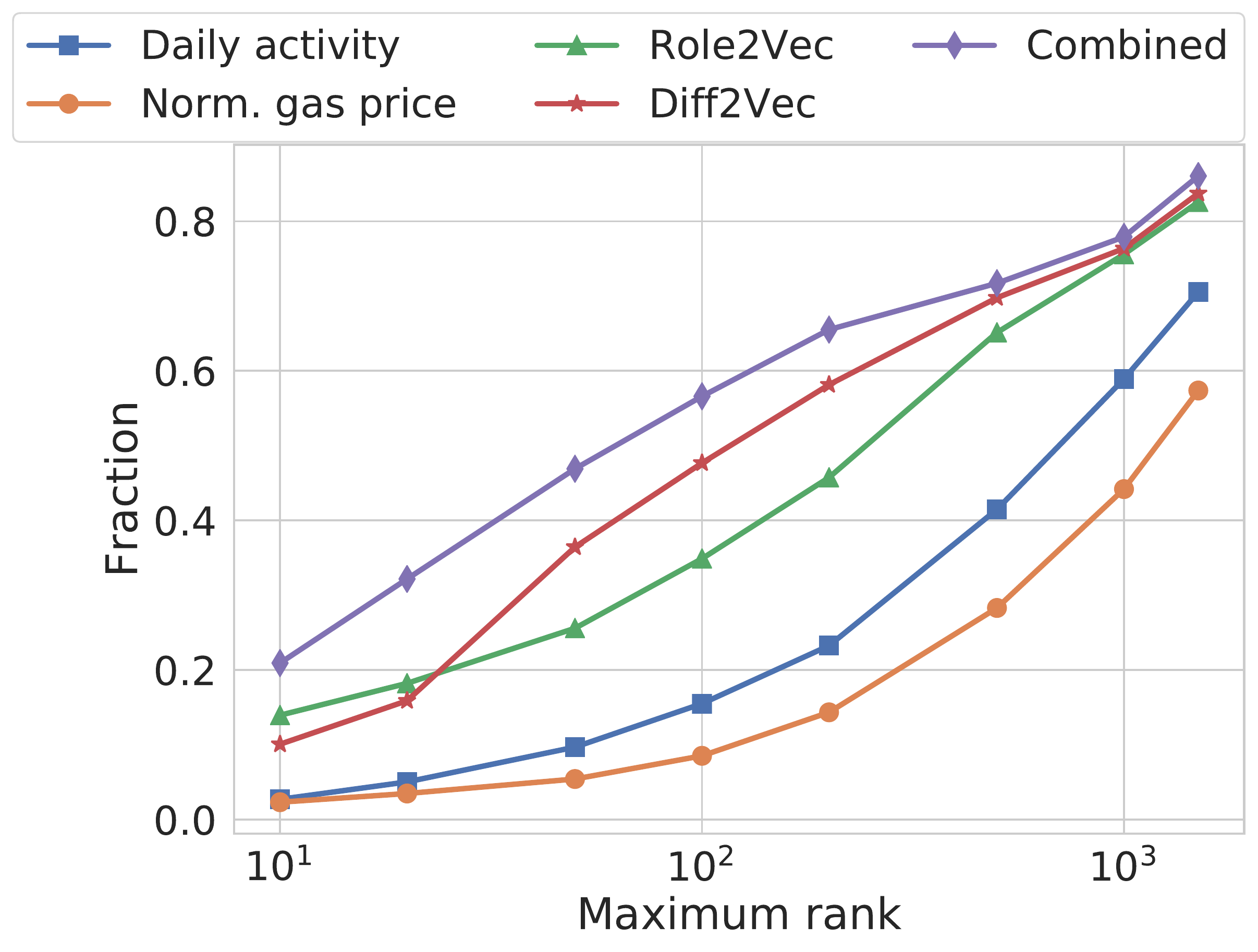}
    \caption{Fraction of ENS address pairs correctly identified within a given maximum rank, for different embedding methods.}
    \label{fig:ens_performance}
\end{figure}

\section{Deanonymizing trustless mixing services on Ethereum} \label{sec:deanonymising}
As the Ethereum community realises the consequences of the lack of privacy on Ethereum, more and more emphasis is put on increasing transaction privacy~\cite{meiklejohn2018mobius,seres2019mixeth,shlomovits2019sharelock}. Hence, privacy-enhancing tools became crucially important gadgets in the Ethereum ecosystem. Without doubt, the most popular is Tornado Cash (TC), a non-custodial zkSNARK-based mixer. It allows its users to enhance their anonymity by hiding their identity among a set of participating users. In this section, we provide techniques and heuristics to decrease the anonymity achieved in a TC mixer.

The Tornado Cash (TC) Mixers are sets of trustless Ethereum smart contracts allowing Ethereum users to enhance their anonymity. A TC mixer contract holds equal amounts of funds (ether or other ERC-20 tokens) from a set of depositors. One mixer contract typically holds one type of asset. In case of the TC mixer, anonymity is achieved by applying zkSNARKs~\cite{groth2016size}. Each depositor inserts a hash value in a Merkle-tree. Later, at withdraw time, each legitimate withdrawer can prove unlinkably with a zero-knowledge proof that they know the pre-image of a previously inserted hash leaf in the Merkle-tree. Subsequently, users can withdraw their asset from the mixer whenever they consider that the size of the anonymity set is satisfactory.

Cryptocurrency mixers typically provide $k$-anonymity (also known as plausible deniability) to their users~\cite{samarati1998protecting}. Generally speaking, a $k$-anonymized dataset has the property that each record is indistinguishable from at least $k-1$ others. Specifically, if a mixer contract holds $n$ deposits out of which $n-k$ had already been  withdrawn, then the next withdrawer will be indistinguishable among at least those $k$ users who have not withdrawn from the mixer yet. Hence each withdrawer can enhance their transaction privacy and make their identity indistinguishable among at least $k$ addresses. We call the set containing the $k$ indistinguishable addresses the anonymity set of the user.

\begin{figure}
    \centering
    \includegraphics[width=0.45\textwidth]{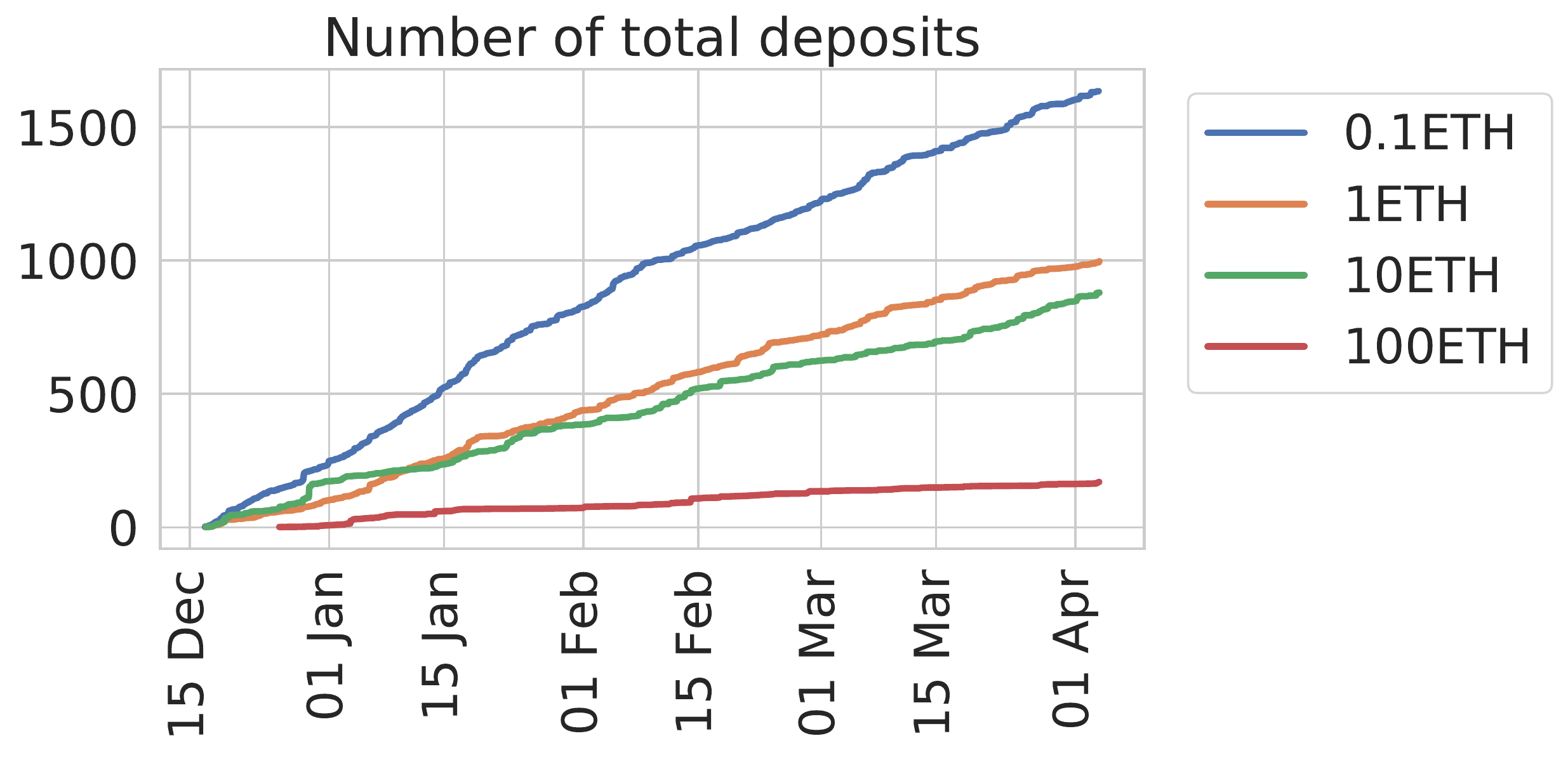}
    \caption{The number of total deposits in each TC mixer over time. This is an upper bound for the achievable anonymity set size when a withdraw transaction is executed. The popularity of the $0.1$ETH mixer is superior compared to higher value mixers.}
    \label{fig:deposits_over_time}
\end{figure}

In Figure~\ref{fig:deposits_over_time}, we show the changes in the anonymity set size over time for four TC mixer contracts ($0.1$ ETH, $1$ ETH, $10$ ETH, $100$ ETH) respectively. Since TC was launched in December 2019, hundreds of deposits were placed in the mixers as more and more user interacted with this service. In general, we observe orders of magnitude lower activity for the $100$ETH mixer, thus it does not provide as much anonymity as mixers with lower values ($0.1$ETH, $1$ETH, $10$ETH).

\subsection{Heuristics for linking mixer deposits and withdraws}\label{sec:heuristics}
Unfortunately, careless usage easily reveals links between deposits and withdraws and also impact the anonymity of other users, since if a deposit can be linked to a withdraw, it will no longer belong to the anonymity set.  Next, we list three usage patterns that can be used to link deposits and withdraws. The simplest careless usage is applying the same address for deposit and withdraw transactions as well:

\noindent \textbf{Heuristic 1.}
\textit{If there is an address from where a deposit and also a withdraw has been made, then we consider these deposits and withdraws linked.}

The next heuristic is based on salient gas price settings. Most wallet softwares, e.g. Metamask or My Ether Wallet, automatically sets gas prices as multiples of Gwei ($10^{9}$ wei, i.e. giga wei). However, one can observe gas prices whose last 9 digits are non-zero, hence those gas prices are likely set by the transaction issuer manually. These custom-set gas prices can be used to link deposits and withdraw transactions. For instance, one might observe the deposit transaction\footnote{Depositor:$\mathit{0x074a3e9451fe3fb47be47786cf2dc4e84e797a6f}$} at block height $9,418,956$ with $5.130909091$ Gwei gas price. Later on, there is a withdraw transaction\footnote{Withdrawer:$\mathit{0x0f2437ff38e032596f2226873038230dcb22c485}$} at block height $9,419,096$ in the Ethereum blockchain with exactly the same custom-set gas price. This deposit and withdraw pair can be linked. 

\noindent \textbf{Heuristic 2.} \textit{If there is a deposit-withdraw pair with \emph{unique and manually set gas prices}, then we consider them as linked.}

Frequently, users reveal links between their deposit and withdraw addresses if they sent transactions from one of their addresses to another address owned by them. We conjecture that users falsely expect that withdraw addresses are clean, therefore they can send transactions from any address to their clean withdraw addresses. However, if the withdraw address can be linked to one of their deposit addresses, then they effectively lose all privacy guarantee accomplished by the fresh withdraw address. Express differently, if users run out of clean funds at their fresh addresses, they might feel tempted to move "dirty" assets to their "clean" addresses. Again, such a transaction links "clean" and "dirty" addresses which is captured by the following heuristic.

\noindent \textbf{Heuristic 3.}
Let $d$ be a deposit and $w$ a withdraw address in a TC mixer. If there is a transaction between $d$ and $w$ (or vice versa), we consider  the addresses linked.

One could easily generalize Heuristic 3 by requiring transactions to be sent from not only a depositor address $d$, but rather from any address in the cluster of addresses containing $d$. However, we leave the implementation of this generalization for future work.

\begin{table*}
{\footnotesize
  \begin{center}
    \begin{tabular}{lccccc}\toprule 
    & \multicolumn{4}{c}{\textbf{Deanonymized withdraws}} & \textbf{All}\\
       \textbf{Mixer} & \textbf{Heuristic 1} & \textbf{Heuristic 2} & \textbf{Heuristic 3} & \textbf{Total}& \textbf{Withdraws}\\
      \hline
      $0.1$ETH & $\num[group-separator={,}]{95}$ $(7.5\%)$& $\num[group-separator={,}]{80}$ $(6.2\%)$& $113$ $(8.8\%)$&$218$ $(17.1\%)$&$\num[group-separator={,}]{1272}$\\
      $1$ETH & $21$ $(2.5\%)$& $40$ $(4.8\%)$& $75$ $(9\%)$&$110$ $(13.2\%)$&$833$\\
      $10$ETH & $8$ $(1.1\%)$& $9$ $(1.2\%)$& $46$ $(6.2\%)$&$60$ $(8.1\%)$&$738$\\
      $100$ETH & $2$ $(1.5\%)$& $5$ $(3.8\%)$& $3$ $(2.3\%)$&$7$ $(5.3\%)$&$132$\\
      \bottomrule
    \end{tabular}
  \end{center}
}
  \caption{Number of all withdraws and deanonymized withdraws using the corresponding heuristics in each mixer contract.}\label{tab:mixerheuristics}
\end{table*}

Applying Heuristics~1--3, we found $218$, $110$, $60$, and $7$ withdraws linked in the four mixer contracts ($0.1$ ETH, $1$ ETH, $10$ ETH, $100$ ETH) respectively up to 2020 April 4th, see Table~\ref{tab:mixerheuristics}. We note that withdraws identified by Heuristic 2 can also overlap with other withdraws identified by Heuristic 1 or 3. Hence the number of total linked withdraws are less than the sum of all withdraws individually identified by each heuristic. 

\subsection{Elapsed time between deposits and withdraws, withdraw address reuse}

\begin{figure}
    \includegraphics[width=0.4\textwidth]{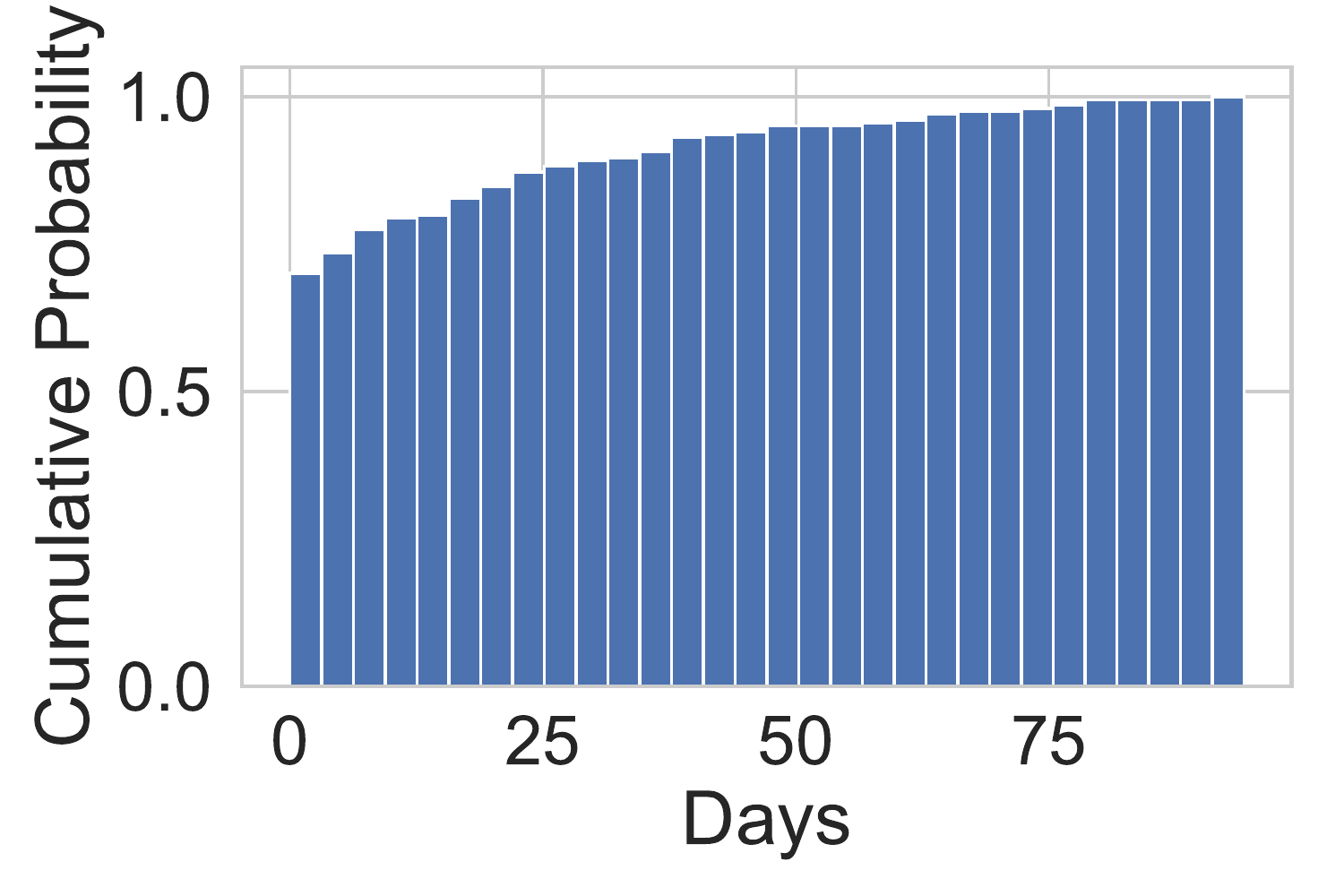}
    \caption{Elapsed time in days between linked  deposit and withdraw transactions for the $0.1$ ETH mixer contract. Vast majority of users do not wait more than one day to withdraw their deposits.}
    \label{fig:mixingintervals}
\end{figure}

\begin{figure}[h]
    \centering
    \includegraphics[width=0.4\textwidth]{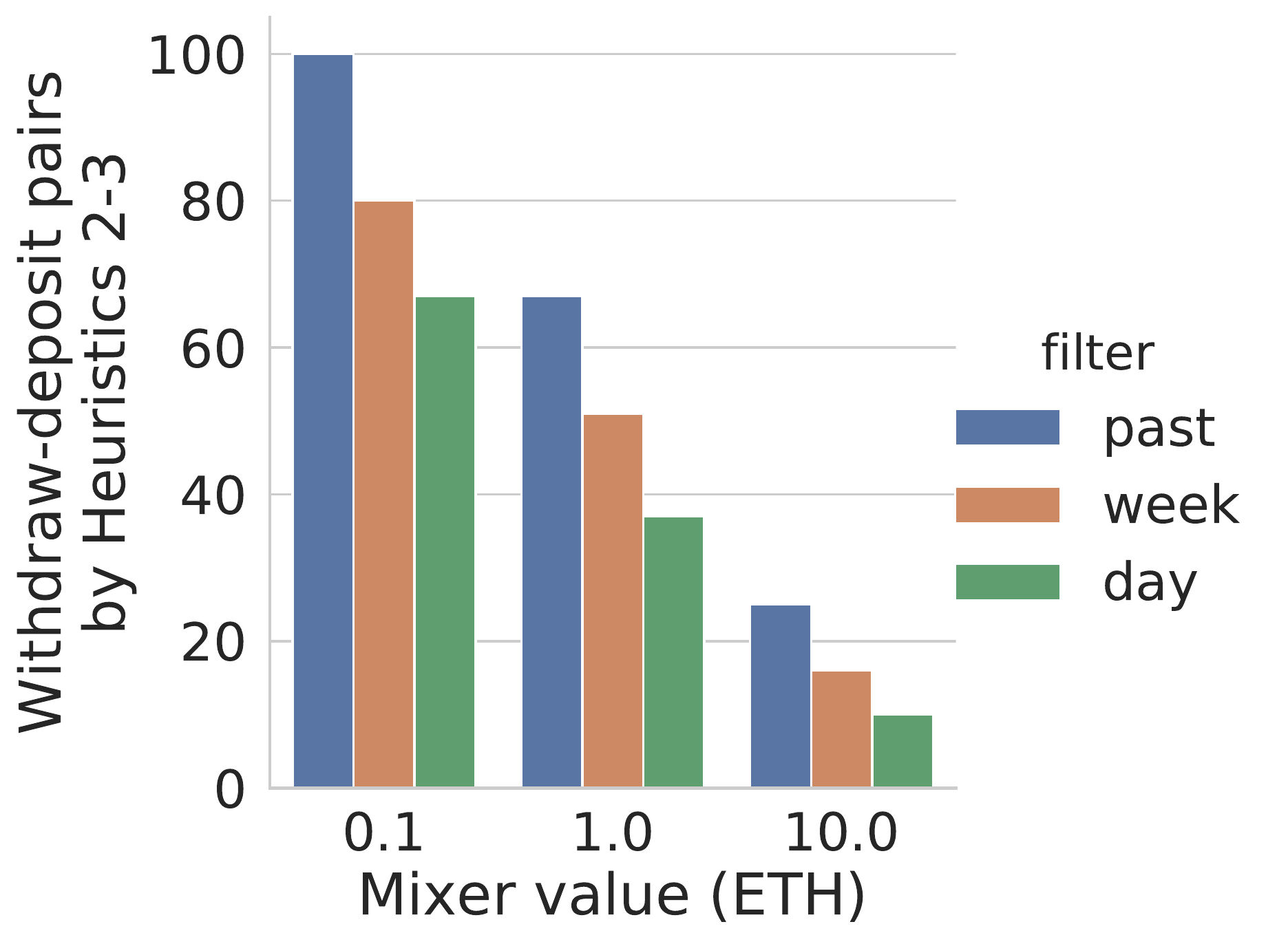}
    \caption{For each mixer, the number of withdraw-deposit pairs linked by Heuristics~2--3 such that the deposit is not later than the day or the week before, or any time in the past.}
    \label{fig:heuristics_in_eval}
\end{figure}

In Figure~\ref{fig:mixingintervals}, we observe that most users of the linked deposit-withdraw pairs leave their deposit for less than a day in the mixer contract. This user behavior can be exploited for deanonymization by assuming that the vast majority of the deposits are always withdrawn after one or two days. 


\begin{figure}
    \includegraphics[width=0.4\textwidth]{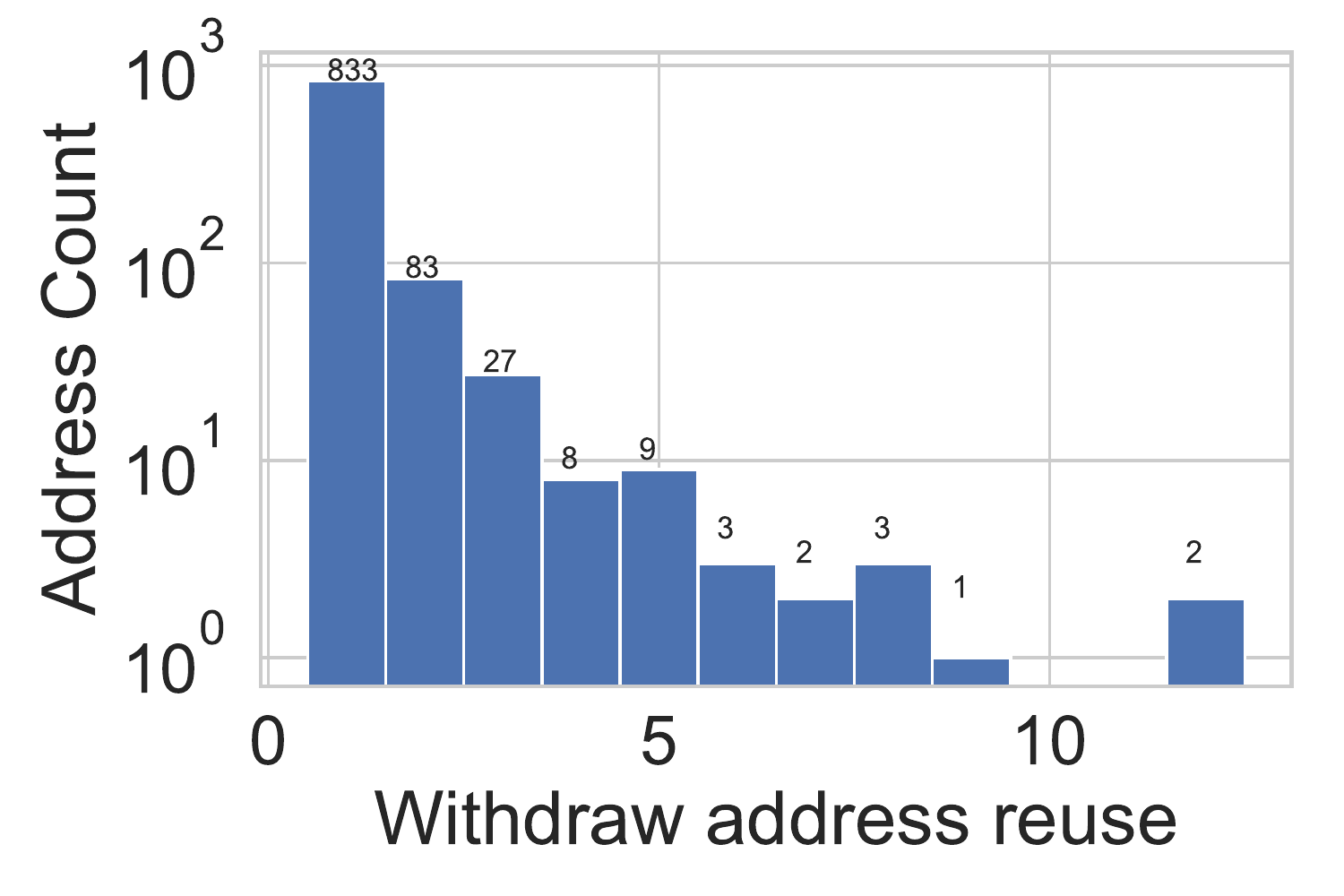}
    \caption{Withdrawal address reuse in the $0.1$ ETH mixer contract. Many users withdraw multiple deposits to the same address, which eases deanonymization and reduces the privacy properties of the mixer.}
    \label{fig:withdrawclusters}
\end{figure}

Even worse, in Figure~\ref{fig:withdrawclusters}, we observe several addresses receiving multiple withdrawals from the $0.1$ ETH mixer contract. For instance, there are $83$ addresses that have withdrawn 2 times and $27$ addresses  with 3 withdrawals each. This phenomenon causes privacy risk not just for the owner of these addresses but also reduces the privacy properties of the mixer. Note that proper usage always requires a withdraw to a fresh address.

\begin{figure*}[p]
    \centering
    \includegraphics[width=\textwidth]{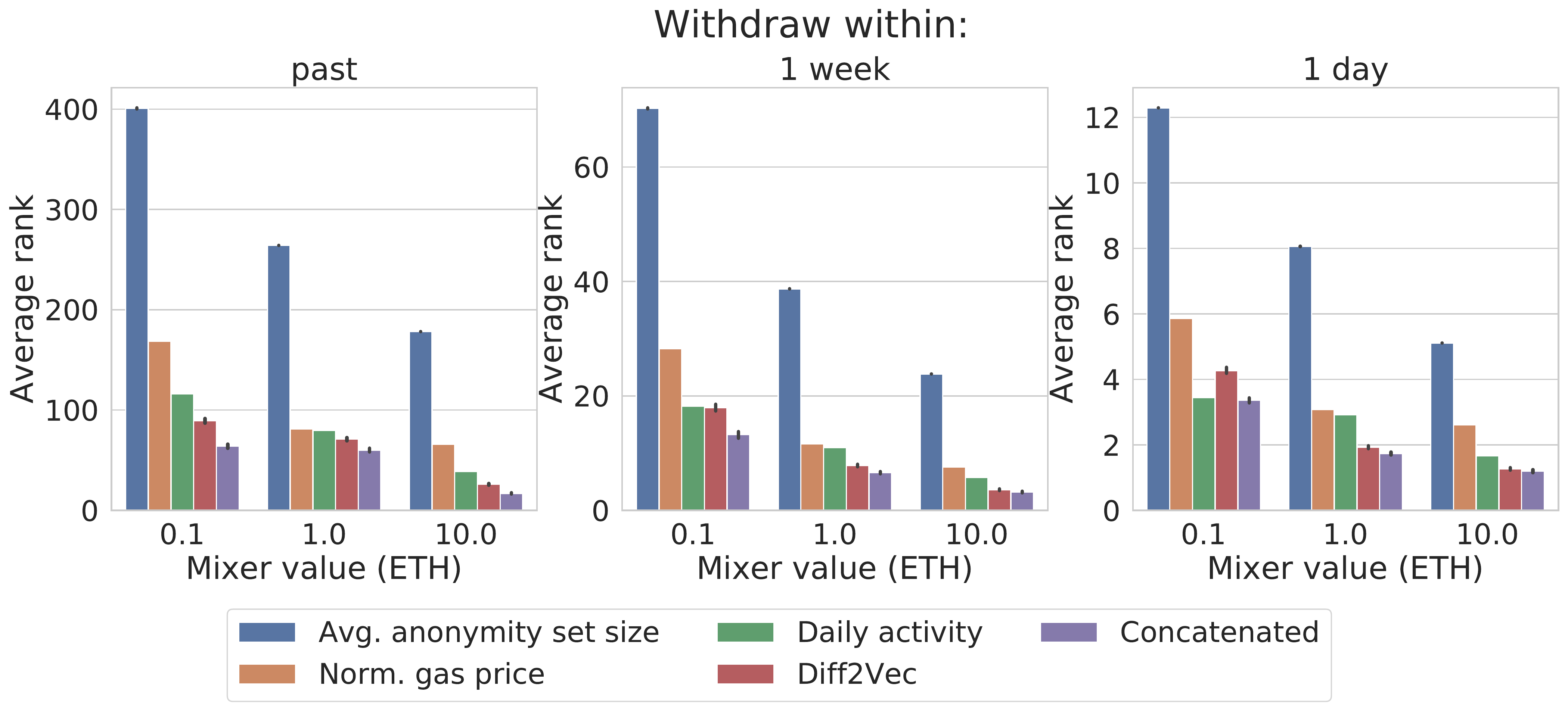}
    \caption{Average rank of the deposit address in the candidate list of our algorithms for the three different ground truth sets described in Section~\ref{sec:tornado_performance}.}
    \label{fig:tornado_performance}
\end{figure*}

\begin{figure*}[p]
    \centering
    \includegraphics[width=\textwidth]{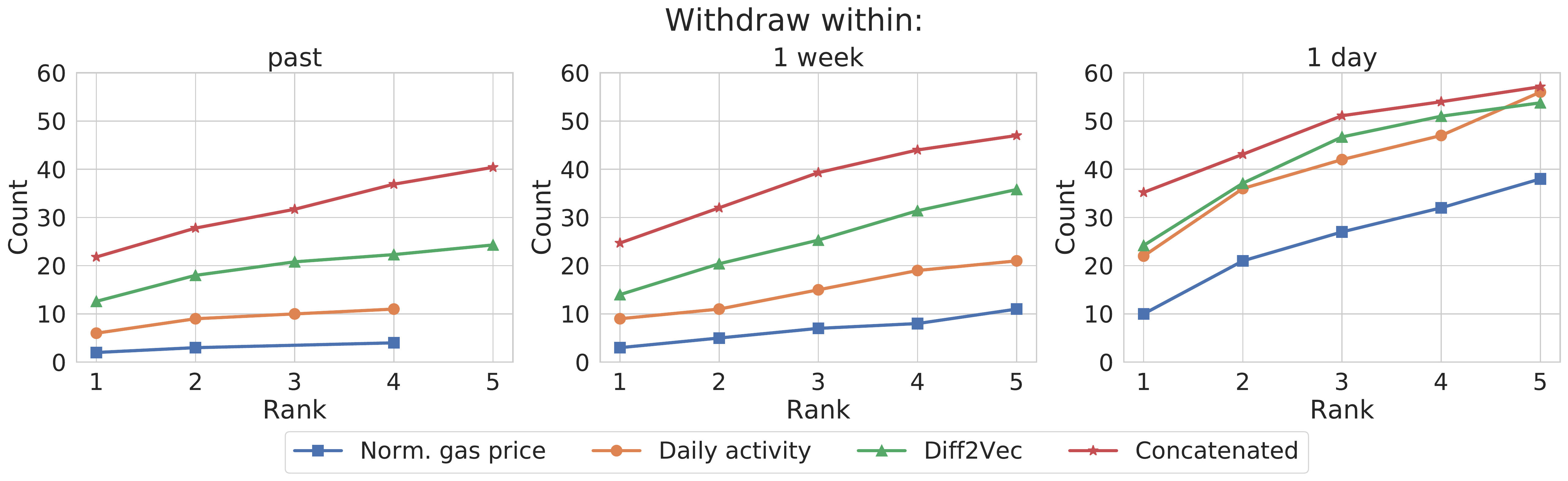}
    \caption{Number of withdraw addresses in the $0.1$ETH mixer contract such that the corresponding deposit is identified within the given rank in the candidate list of each deanonymization technique, separate for the three ground truth sets described in Section~\ref{sec:tornado_performance}.}
    \label{fig:tornado_rank}
\end{figure*}

\begin{figure*}[p]
    \centering
    \includegraphics[width=\textwidth]{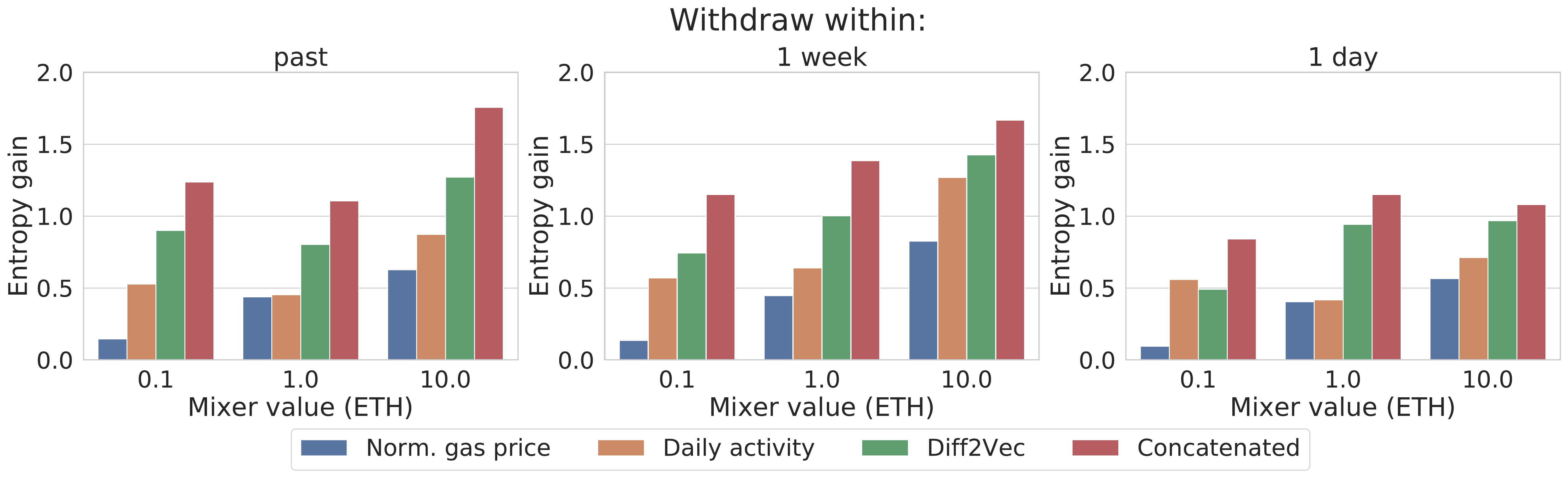}
    \caption{Entropy gain of our best deanonymization methods for the three different ground truth sets described in Section~\ref{sec:tornado_performance}.}
    \label{fig:tornado_entropy_gain}
\end{figure*}

\subsection{Deanonymization performance}\label{sec:tornado_performance}

Next we measure how well the techniques of Section~\ref{sec:pairing} identify the linked withdraw-deposit address pairs. 
We build ground truth by using Heuristics~2--3 of Section~\ref{sec:heuristics}. We omit withdraw-deposit pairs identified by Heuristic~1 from the ground truth as in that case both withdraw and deposit addresses are the same. Such a ``pair'' is trivially identified. We define three different \textbf{ground truth sets}, one when the deposit is within the past day of the withdraw, another when within the past week, and the unfiltered full set, see Fig.~\ref{fig:heuristics_in_eval}. Experiments on the unfiltered full set is labeled \emph{past} in Figures~\ref{fig:tornado_performance}-\ref{fig:tornado_temporal}.

Note that our ground truth sets are compiled by using Heuristics~2--3, and hence are correct up to our best knowledge on the data. Since in Heuristic~2 we used gas prices and in Heuristic~3 an edge between the two addresses, in this section, we show gas price only as reference, and omit the edges used by Heuristic~3 for the network analysis algorithms. As we will see, gas price distribution performs weakly for finding the account pairs identified by the Heuristics despite that Heuristic~2 is based on gas price, adding the edges between accounts identified by Heuristic~3 would yield overly strong deanonymization results since the same information is used for deanonymization and testing. 

Figure~\ref{fig:tornado_performance} shows that an address with withdraw within a day or week has significantly smaller anonymity set size, on  average, since we only search for the corresponding deposit in a smaller set. 
For example, for the $0.1$ETH mixer the original average anonymity set size of $400$ could be reduced to almost $12$ by assuming that the deposit occurred within one day of the withdraw. 

We note that in Figure~\ref{fig:tornado_performance} and all other measurements over the filtered ground truth sets, we do not discount for the withdraw addresses that are not included in the filtered set.  For example, as seen in Figure~\ref{fig:mixingintervals}, for 80 0.1-Ether withdraw transactions, we list candidate deposits, but for the remaining 20, we make no deanonymization attempt.  To normalize the results by considering these withdraws, we have to assume that the corresponding deposit is not in the 80-element candidate set but in the remaining 320, thus giving an average rank contribution of 160 for 20\% of the data.  Hence average rank for 0.1-Ether withdraws with deposit within a week have an additional correction of 32 for average rank; by similar calculations, the correction for transactions within a day is 63.

Daily activity and Diff2Vec have similar performance while their concatenated feature vectors proved to be the best address representation; for the smaller ground truth sets, they identify related deposit addresses within the 20 and 5 closest representations on average. Withdraw linking performance is further improved by concatenating the two models.  Entropy gain is shown in Figure~\ref{fig:tornado_entropy_gain} and the number of withdraws linked to deposits within a given rank of the output for the best methods are in Figure~\ref{fig:tornado_rank}.

In Figure~\ref{fig:tornado_temporal}, we show the withdraw linking performance over time. As the number of active deposits increases, it becomes harder to link withdraws to any of the past deposits. However withdraws that follow the deposit after a few days are still much easier to deanonymize. 

\begin{figure*}
    \centering
    \includegraphics[width=\textwidth]{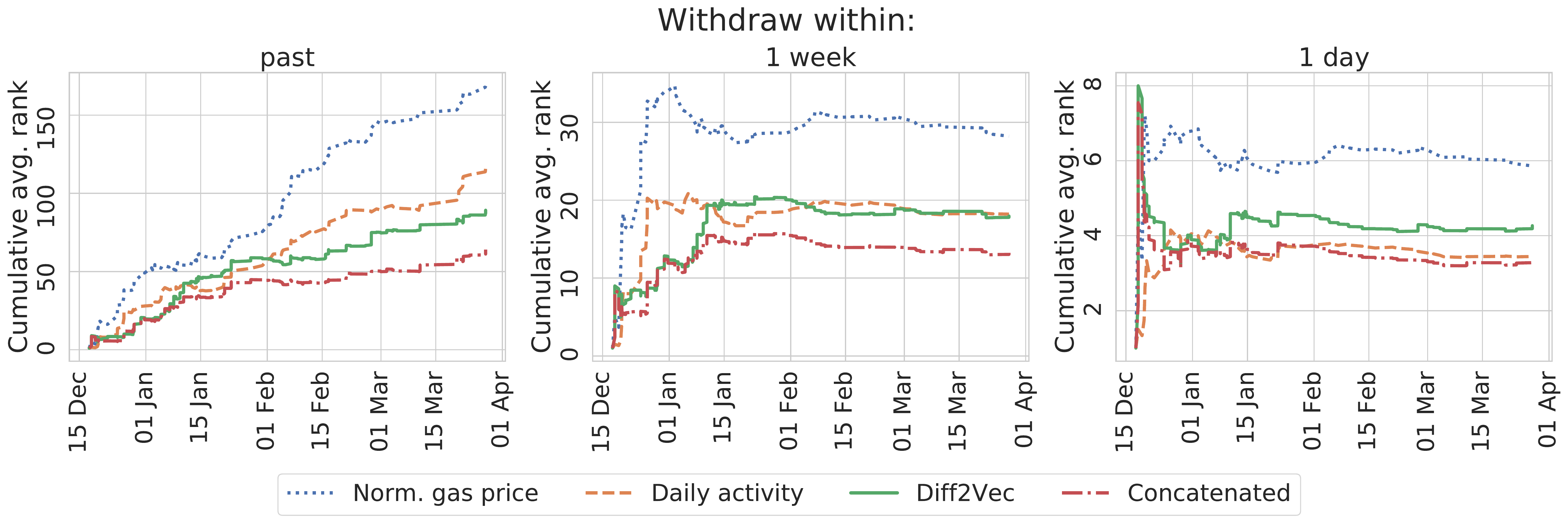}
    \caption{Change of average rank in time, cumulated from the beginning of our data, for the $0.1$ ETH Tornado mixer by using our best deanonymization methods. Results are showed separately for the three ground truth sets described in Section~\ref{sec:tornado_performance}.}
    \label{fig:tornado_temporal}
\end{figure*}

\subsection{Maintaining privacy}
We believe if users were using the technology in a sound way or a privacy-focused wallet software would have helped them and abstracted away potential privacy leaks, then TC mixers could possibly achieve higher degrees of anonymity.
\subsubsection{Randomized mixing intervals}
Mixing participants decrease largely their gained anonymity by withdrawing funds after short time intervals, cf. Figure~\ref{fig:mixingintervals} and~\ref{fig:tornado_performance}. These heuristics can be defeated by randomized mixing intervals. Randomized mixing intervals cannot be enforced by the mixing contract itself, since withdrawals are unlinkable to the deposits. Therefore, this should be accomplished by the user wallet software.
\subsubsection{Fresh withdraw addresses}
Currently, many users apply the same withdraw addresses across several withdraws, see Figure~\ref{fig:withdrawclusters}. This greatly decreases the complexity of linking deposits and withdraws. Therefore users must use fresh withdraw addresses for each of their withdraws. This issue could have been easily fixed on the user interface level.
\subsubsection{Mixer usage and user behaviors}
Mixers mainly attempt to break the link between sets of transaction  graphs associated with Ethereum accounts. As such, users need to ensure that their on-chain behaviors are unlinkable between uses of the TC mixers. Therefore, to ensure maximal privacy, users should use the TC mixers after every transaction. However, this decreases the user experience and ability to use applications on Ethereum.

\section{Danaan-gift attack in Ethereum}\label{sec:danaangift}
The Danaan-gift attack, also known as malicious value fingerprinting, was introduced in~\cite{biryukov2019privacy}. In a value fingerprinting attack, an adversary sends a cryptocurrency transaction with a crafted amount to add a fingerprint to the receiver's account balance. Although value fingerprinting was originally introduced in the context of Zcash, we notice that these attacks are applicable to Ethereum as well. Most wallet software denominates gas prices in multiples of gwei ($10^9$ wei where $1\mathit{ETH} =10^{18}\mathit{wei}$), hence transaction fees overwhelmingly (in $98,1\%$) do not change the last $9$ digits of an account balance. Albeit,  users might set transaction fees manually, potentially changing their own fingerprint (in $1.9\%$). The last $9$ digits of an account balance have no economic significance (1 gwei$\approx 0.0000003\$) $ but could be used as a fingerprint by an adversary.

\begin{table*}
  \begin{center}
    \begin{tabular}{lccccc}\toprule 
       \textbf{Tx} & \textbf{Addresses} & \textbf{Txs} &\textbf{Txs} & \textbf{Avg. Sent} & \textbf{Fingerprint}\\
       \textbf{Cutoff} & & & \textbf{Fingerprinting}&\textbf{Txs/Address} & \textbf{survival prob.}\\
      \hline
      50 & $\num[group-separator={,}]{56399}$& $\num[group-separator={,}]{120461}$&$\num[group-separator={,}]{61393}$& $2.14$&$21.83\%$\\
      100 & $\num[group-separator={,}]{56973}$& $\num[group-separator={,}]{161427}$&$\num[group-separator={,}]{73340}$& $2.83$&$17.97\%$\\
      500 & $\num[group-separator={,}]{57951}$& $\num[group-separator={,}]{384369}$&$\num[group-separator={,}]{129431}$& $19.48$&$6.56\%$\\
      All & $\num[group-separator={,}]{58367}$& $\num[group-separator={,}]{1137558}$&$\num[group-separator={,}]{352042}$& $19.49$&$0.073\%$\\
      \bottomrule
    \end{tabular}
    
  \end{center}
  \caption{Balance fingerprinting statistics for Ethereum users. In each cutoff, we only consider addresses that did not issue more transactions than the cutoff value. We observe that vast majority of fingerprinting transactions were sent by addresses that send numerous transactions. Fingerprinting an address with few sent transactions is obviously easier than an address with many issued transactions. Fingerprint survival probabilities were calculated as in Equation~\ref{eq:survivalprob}. }\label{tab:fingerprintingstatistics}
\end{table*}

\begin{figure}[t]
    \includegraphics[width=0.5\textwidth]{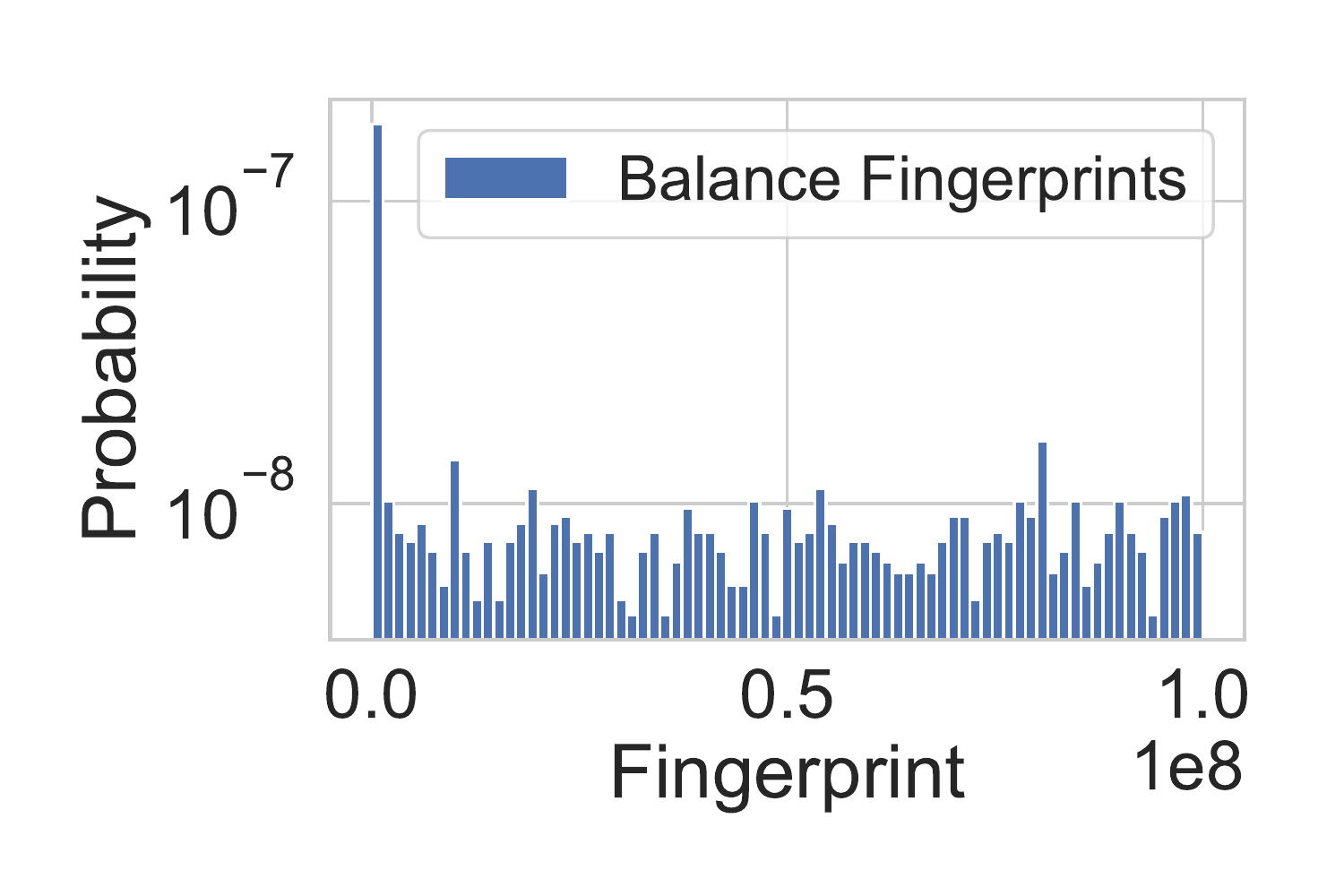} 
    \caption{Ether account balance fingerprints. Many Ethereum accounts have an integer account balance. This allows an attacker to fingerprint the last $9$ digits of an account balance. Account balance fingerprints distribution has a $4.01$ bit entropy and $6.44$ bit entropy gain.}
    \label{fig:allfingerprints}
\end{figure}

\begin{figure}[t]
\centering
\includegraphics[width=\linewidth,trim={11.5cm 15cm 9.5cm 4.2cm},clip]{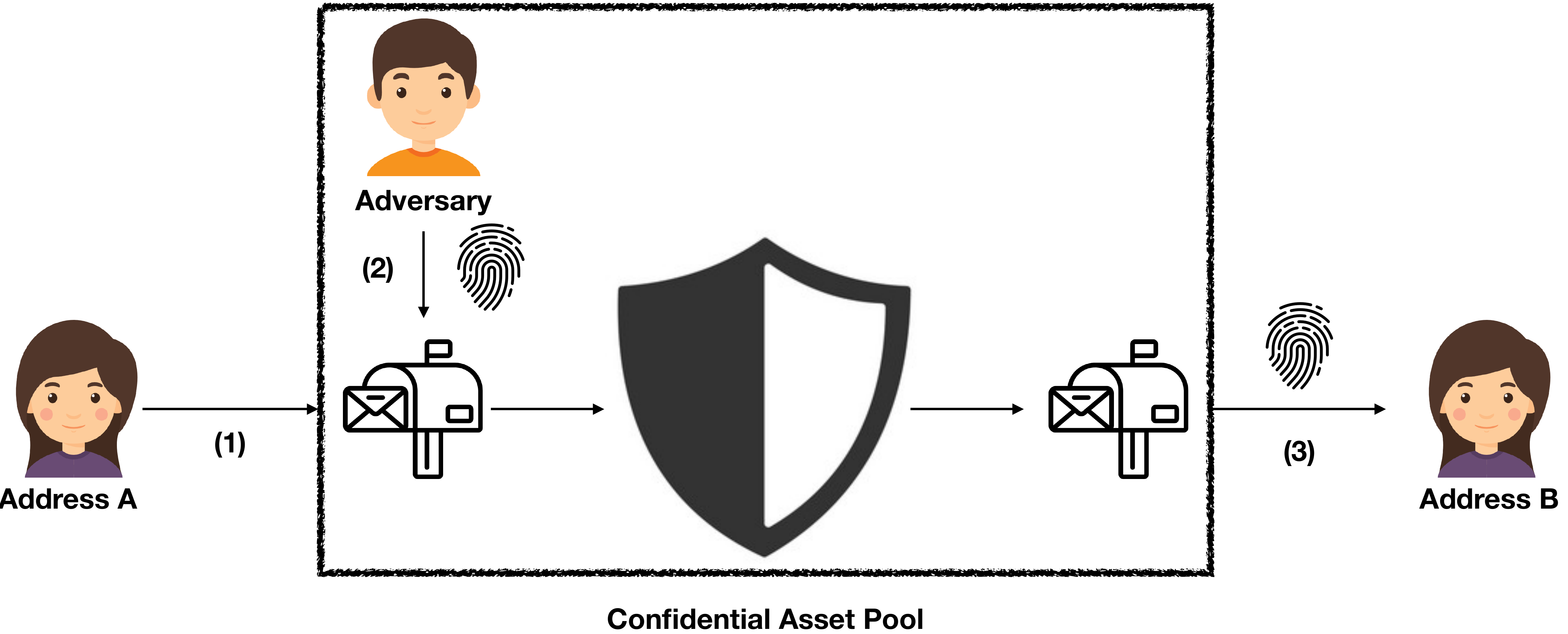}
\caption{Danaan-gift attack in confidential transaction layers. An adversary can fingerprint (2) an unsuspecting user's account balance after she deposited assets (1) in a confidential asset pool, e.g. AZTEC. Adversary can track the user when she leaves (3) the confidential pool.}
\label{fig:danaanexplainer}
\end{figure}

First, we measure the fraction of ether transfer transactions that modify the account fingerprint ($43,7\%$). 
For the sake of robustness of the measurements, we chose fingerprints with the last eight digits. As seen in Figure~\ref{fig:allfingerprints}, account balances are mostly integer values. However, the rest of the fingerprint values modulo $\num[group-separator={,}]{100000000}$ are moderately uniformly distributed. The entropy of the account balance fingerprints is $4.01$ with a $6.44$ entropy gain. These results suggest that account balances might be easily fingerprinted. In the sequel, we  estimate the average  fingerprint survival probability.


Let $F$ denote the event that a fingerprint of an address remains unchanged. To approximate the event probability  $\Pr(F)$, let $p$ denote the probability that a transaction modified the fingerprint and let $x$ denote the number of transactions sent or received by the given address in our dataset. By assuming that each transaction is independent from all others, the fingerprint survival probability of this address is $(1-p)^{x}$.

We observe that the distribution of the number $x$ of transactions sent and received by an address follow power-law distribution $\sim x^{-k}$ with $k=1.91$. 
The average survival probability of all addresses can hence be approximated by the following integral, where we group by $x$, the number of transactions of an address:
\begin{equation} \label{eq:survivalprob}
    \Pr(F)=\int_{1}^{\infty} x^{-k}(1-p)^{x} dx,
\end{equation}
which can be computed in a closed formula. The numerical values are summarized in Table~\ref{tab:fingerprintingstatistics}.

As the number of transactions sent follow a power-law distribution, the average value  is skewed by the tail of the distribution. Therefore it makes sense to calculate the average survival probability for several cutoffs of the tail, see Table~\ref{tab:fingerprintingstatistics}. Namely, in each cutoff we only consider addresses in our data set that sent less number of transactions than the cutoff value. One can observe how fingerprint survival probability increases among users with a small number of transactions.  For example, an adversary could successfully fingerprint $21.83\%$ of the addresses that send not more than $50$ transactions. This result is comparable to the $16.6\%$ fingerprint survival probability observed in Zcash~\cite{biryukov2019privacy}. 

\subsection{Danaan-gift attack for confidential transaction overlays}
A future application of Danaan-gift attacks in Ethereum might be linking confidential transactions in privacy-enhancing overlays, like the AZTEC protocol~\cite{williamson2018aztec}.

In a confidential transaction overlay, users can convert public amounts to confidential notes. Subsequently, they can send confidential notes to intended recipients by splitting and or joining their confidential notes. The amount of confidential notes is hidden, yet publicly verifiable due to range proofs. Users can also convert their confidential tokens back to public amounts. 

In this scenario, an adversary can fingerprint unsuspecting users inside a confidential transaction overlay, see Figure~\ref{fig:danaanexplainer}. When a user deposits a public amount to the confidential asset pool, an adversary could fingerprint her account balance by sending her a confidential transaction with a fingerprinting amount. Subsequently, the user might issue several confidential transactions in this privacy-enhanced overlay. If the victim's balance fingerprint survives during the course of issued confidential transactions, the adversary can identify the user withdrawing funds from the confidential asset pool by inspecting the fingerprint on the withdrawn amount. Thus the fingerprinting adversary can assess how much the unsuspecting user paid in the confidential asset pool.


\section{Future directions}\label{sec:futurework}
We expect that in the near future more potent and powerful deanonymization tools and techniques will emerge. In this work, we solely applied on-chain data for deanonymizing Ethereum users. Subsequent tools will likely use a combination of on-chain and off-chain data. Therefore we deem the following directions would be extremely valuable for future work for the broader cryptocurrency research community.

\subsection{Further quasi-identifiers}
In this work we identified several quasi-identifiers of Ethereum accounts, such as time-of-day activity, gas price profile and position in the Ethereum transaction graph. However, we forecast that many more quasi-identifiers can be used for further profiling and deanonymizing Ethereum users. One such potential quasi-identifier is wallet fingerprints. One could establish which wallet a certain user employs by assessing how transaction gas prices are calculated. Different wallet softwares use different methods to compute suggested gas prices~\cite{werner2020step}.

\subsection{Network-level privacy}
Assessing Ethereum's privacy provisions entirely can only be established if one considers the full life-cycle of a transaction. Specifically, one also needs to understand how much privacy is lost when users interact with full nodes or wallet providers. 

As the history of Bitcoin and other cryptocurrencies showed, full nodes and wallet providers can deanonymize regular users and light clients already on the network layer~\cite{biryukov2018deanonymization,biryukov2019security,fanti2017anonymity,fanti2017deanonymization,tramer2020remote,moser2018empirical}. An attacker could establish many well-connected nodes in the peer-to-peer layer to log the timing information of transactions. Due to the symmetry of broadcast, the adversary could infer the origin of the transaction~\cite{fanti2017deanonymization,biryukov2018deanonymization}. Yet, there are solely measurement studies on Ethereum's P2P network structure~\cite{kim2018measuring,gencer2018decentralization}. Therefore, it would be worthwhile to conduct a study on Ethereum's P2P network, but from a privacy point of view. Fortunately, several proposals had been made to enhance network-level privacy for cryptocurrencies~\cite{bojja2017dandelion,fanti2018dandelion++}.

Additionally, in Ethereum, special nodes called relayers gain more and more popularity. Relayers allow senders to issue feeless transactions, i.e. users can send transactions from addresses that do not hold ether yet. Such relayer nodes can also easily deanonymize their users. This is especially problematic in case of non-custodial mixers, like Tornado Cash.
\subsection{Wallet and Browser Privacy}
It has been shown how online trackers and cookies can aid the deanonymization of cryptocurrency users even when their coins were mixed through the use of a mixer~\cite{Goldfeder2017WhenTC}. Many users of the Ethereum blockchain make use of a tool called MetaMask, a browser extension available in most desktop browsers. As such, for future research, it would be fascinating to analyze how the use of this extension affects the privacy of Ethereum users, even with the use of mixers. It may be possible to use the techniques presented in~\cite{Goldfeder2017WhenTC} to deanonymize users. Furthermore, as many Ethereum users also make use of mobile wallets, it may be useful to investigate how mobile phones can affect cryptocurrency users' privacy and assess the privacy guarantees of these mobile wallet providers~\cite{biryukov2019privacy}.

\subsection{Privacy of UTXO-based cryptocurrencies}
We note that the deanonymizing power of quasi-identifiers (e.g. temporal activity, wallet fingerprints etc.) is also applicable to UTXO-based cryptocurrencies. Even though in that case deanonymization is slightly more involved as one need to apply our techniques not to individual addresses but rather to clusters of UTXOs. We do foresee that more potent agencies can and will engage in such deanonymization campaigns. We believe that in practice, due to the aforementioned quasi-identifiers, also Bitcoin non-custodial mixers provide drastically less privacy and fungibility than what currently the community expects from those privacy-enhancing technologies.

\section{Conclusion} \label{sec:conclusion}
In this paper, we studied how graph representation learning, time-of-day activity and gas price based profiling can be used to link Ethereum addresses owned by the same user. The Ethereum Name Service (ENS) relations in our data set provided ground truth information to quantitatively compare and analyze the  performance of these quasi-identifiers. Our results showed that recent node embedding methods had superior performance compared to user activity based profiling techniques.

Recently, several privacy-enhancing overlays have been deployed on Ethereum, such as Tornado Cash mixers. By our measurements, their decreased usability and immature user behavior prevent them from reaching their highest attainable privacy guarantees. Evaluation on heuristically linked mixing participants showed that profiling techniques, especially novel node embedding algorithms, can significantly reduce the anonymity set sizes of the mixing parties.

Finally, we investigated an active attack scenario for Ethereum confidential transactions by repurposing the  Danaan-gift attack, originally introduced for Zcash. The estimated success probability of the attack demonstrates that users should be concerned and warned about these attacks against transaction confidentiality.

We release the collected data as well as our source code to facilitate further research\ifarxiv\footnote{\url{https://github.com/ferencberes/ethereum-privacy}}\fi.

\ifarxiv
\section*{\ackname}
We thank Daniel A. Nagy, David Hai Gootvilig, Domokos M. Kelen and Kobi Gurkan for conversations and useful suggestions. Support from Project 2018-1.2.1-NKP-00008: Exploring the Mathematical 
Foundations of Artificial Intelligence and the “Big Data—--Momentum” grant of the Hungarian Academy of Sciences.
\fi

\bibliographystyle{plain}
\bibliography{sample}

\appendix
\section{Ethereum basics}\label{sec:ethbasics}
Ethereum is a cryptocurrency built on top of a blockchain~\cite{wood2014ethereum}. There are two types of accounts in Ethereum: externally owned accounts (EOAs) and contract accounts, also known as smart contracts. The global state of the system consists of the state of all different accounts. EOAs are controlled by an asymmetric cryptographic key pair, while smart contracts are controlled by their code stored in persistent, immutable storage. EOAs can issue transactions, which might alter the global state. Transactions can either create a new contract account or call existing accounts. Accounts have balances in ether, the native currency of Ethereum, and are denominated in wei where $1ETH=10^{18}wei$.

Calls to EOAs can transfer Ether to the callee, while contract calls execute the code associated with the smart contract. The contract execution might alter the storage of the account, moreover can call to other accounts - these are called \textit{internal transactions}. Contract code is executed in the Ethereum Virtual Machine (EVM).



\subsection{Gas mechanism} \label{sec:gasmechanism}
A crucial aspect of the EVM is the gas mechanism. To every EVM opcode, there is a gas amount assigned, which is deemed to price the computational complexity of that opcode. For instance, adding two elements on top of the stack consumes only $3$ gas, but storing a non-zero stack element in the persistent storage burns \num[group-separator={,}]{20000} gas. The base gas fee for every transaction is \num[group-separator={,}]{21000} gas, which is not paid for internal transactions. Therefore, whenever one executes a smart contract code in the EVM, the execution consumes a certain amount of gas. At each transaction, the sender needs to define the maximum number of gas, called gas limit, they allow their transaction to consume. Usually, due to the dynamic nature of the state, one does not know statically how much gas would her transaction burn. If a transaction does not consume all the gas assigned to it, then surplus gas is refunded to the caller, however, if a transaction runs out of gas, then all state changes are reverted and assigned gas is taken from the caller.

As of now, gas can only be purchased by Ethereum's native currency, ether, at a dynamically changing price, called gas price. 
Miners are naturally incentivised to insert transactions with higher gas prices into their blocks to increase their collected transaction fees.

\end{document}